\pgfplotsset{compat=newest}
\newtheorem{thm}{Theorem}
\newtheorem{rem}{Remark}
\begin{document}
%
\title{Band-Passing Nonlinearity in Reset Elements}
%
%
%

\author{Nima~Karbasizadeh,~\IEEEmembership{Member,~IEEE,}
        Ali~Ahmadi~Dastjerdi,~\IEEEmembership{Member,~IEEE,}
        Niranjan Saikumar,~\IEEEmembership{Member,~IEEE,}
        and~S.~Hassan~HosseinNia,~\IEEEmembership{Senior~Member,~IEEE}
\thanks{N. Karbasizadeh, A. Ahmadi Dastjerdi, N. Saikumar, S.H.~HosseinNia are with the Department
of Precision and Microsystem Engineering, Delft University of Technology, Delft,
The Netherlands, e-mail: \{N.KarbasizadehEsfahani; A.AhmadiDastjerdi; N.Saikumar; S.H.HosseinNiaKani\}@tudelft.nl.}
\thanks{}
\thanks{}}

%
%

\markboth{Journal of IEEE/ASME Transactions on Mechatronics}%
{}
%



\maketitle

\begin{abstract}
This paper addresses nonlinearity in reset elements and its effects. Reset elements are known for having less phase lag compared to their linear counterparts; however, they are nonlinear elements and produce higher-order harmonics. This paper investigates the higher-order harmonics for reset elements with one resetting state and proposes an architecture and a method of design which allows for band-passing the nonlinearity and its effects, namely, higher-order harmonics and phase advantage. The nonlinearity of reset elements is not entirely useful for all frequencies, e.g., they are useful for reducing phase lag at cross-over frequency region; however, higher-order harmonics can compromise tracking and disturbance rejection performance at lower frequencies. Using proposed ``phase shaping'' method, one can selectively suppress nonlinearity of a single-state reset element in a desired range of frequencies and allow the nonlinearity to provide its phase benefit in a different desired range of frequencies. This can be especially useful for the reset elements in the framework of ``Constant in gain, Lead in phase'' (CgLp) filter, which is a newly introduced nonlinear filter, bound to circumvent the well-known linear control limitation -- the waterbed effect.  
\end{abstract}

\begin{IEEEkeywords}
Nonlinear Control, Reset Control, Motion Control, Mechatronics 
\end{IEEEkeywords}

%
\IEEEpeerreviewmaketitle

\section{Introduction}
%
%
%
%
\IEEEPARstart{T}{he} growing demand on precision, bandwidth and robustness of controllers in fields like precision motion control are pushing linear control to its limits. Fundamental limits of linear controllers, namely, Bode's phase gain relationship or Bode's sensitivity integral theorem, a.k.a., ``the waterbed effect''~\cite{maclejowski1989multivariate}, has made researchers and industries to change course toward nonlinear control to circumvent these limitations. Reset control is one such nonlinear technique which has gained significant prominence in recent times.\\
Reset control technique was first introduced by Clegg~\cite{clegg1958nonlinear} as a nonlinear integrator and its advantage was described in~\cite{krishnan1974synthesis} in terms of reducing phase lag compared to its linear counterparts. The main idea of reset control is to reset a subset of controller states when a predefined resetting condition is met. More sophisticated reset elements were developed over the years, namely, First-Order Reset Element (FORE)~\cite{horowitz1975non}, Generalized First-Order Reset Element (GFORE)~\cite{guo2009frequency} and Second-Order Reset Element (SORE)~\cite{hazeleger2016second}. These reset elements were used in different capacities such as phase lag reduction, decreasing sensitivity peak, narrowband and broadband phase compensation and approximating the complex-order behaviour~\cite{wu2006reset, li2005nonlinear,li2010reset,li2011optimal,palanikumar2018no,saikumar2019resetting,valerio2019reset}. A new reset-based architecture was recently proposed by~\cite{saikumar2019constant}, which has a constant gain while providing phase lead in a broad range of frequencies. This architecture, named ``Constant in gain, Lead in phase'' (CgLp), can completely replace or take up a significant portion of derivative duties in the framework of PID.\\ 
Being a nonlinear controller, reset elements produce higher-order harmonics, which in turn makes reset control two-edged. While it is capable of overcoming linear control limitations, existence of higher-order harmonics can compromise performance of the system~\cite{karbasizadeh2020benefiting}. Recently researchers found Describing Function (DF) method for analysing the reset elements in frequency domain~\cite{guo2009frequency} insufficient, since it neglects the effect of higher-order harmonics. A generalised form of DF method which accounts for higher-order harmonics called Higher-Order Sinusoidal Input Describing Function (HOSIDF)~\cite{nuij2006higher} was adopted for reset elements in~\cite{saikumar2020loopshaping,dastjerdi2020closed}. There are efforts in the literature to reduce the adverse effects of higher-order harmonics in one frequency or by  tuning of reset element parameters or finding the optimal sequence of elements~\cite{karbasizadeh2020benefiting,bahnamiri2020tuning,dastjerdi2020optimal,cai2020optimal}. However, to the best of authors knowledge, there is no systematic approach in the literature for deliberately reducing higher-order harmonics to a desired upper bound in a range of frequencies.\\
The main benefit of reset elements is reduction of phase lag with respect to their linear counterparts. This characterisation is beneficial in cross-over frequency region and has no clear benefit at other frequencies. Furthermore, higher-order harmonics compromise the performance of the system in terms of tracking precision and disturbance rejection which is basically discussed in loop-shaping method at lower frequencies. Thus, providing a method to band-pass the nonlinearity and in turn higher-order harmonics seems logical to help keep the positive edge of reset elements while limiting its negative edge.\\
The main contribution of this paper is the investigation higher-order harmonics in reset elements with one resetting state and proposing an architecture and a method of design called "phase shaping" to allow for band-passing nonlinearity in reset elements. In other words, using the proposed architecture and phase shaping method one can create a reset element, e.g., a Clegg integrator, a FORE or a CgLp, which is nonlinear in a range of frequencies while it acts linear in terms of steady-state response at other frequencies. Meaning that the element will limit its phase benefits to where it is needed and will have no higher-order harmonics at other frequencies. The paper also investigates the performance of the proposed element in framework of CgLp.\\
The remainder of this paper is organised as follows: The second section presents the preliminaries. The following one introduces and discusses the architecture of the proposed reset element and investigates its HOSIDF. The fourth section will propose the design and tuning method called phase-shaping. The following one will introduce an illustrative example and verify the discussions in practice. Finally, the paper concludes with some remarks and recommendations about ongoing works.
\section{Preliminaries}
This section will discuss the preliminaries of this study.
	\subsection{General Reset Controller}
Following is a general form of a reset controller~\cite{Guo:2015}:
\begin{align}
\label{eq:reset}
{{\sum }_{R}}:=\left\{ \begin{aligned}
& {{{\dot{x}}}_{r}}(t)={{A_r}}{{x}_{r}}(t)+{{B_r}}e(t),&\text{if }e(t)\ne 0\\ 
& {{x}_{r}}({{t}^{+}})={{A}_{\rho }}{{x}_{r}}(t),&\text{if }e(t)=0 \\ 
& u(t)={{C_r}}{{x}_{r}}(t)+{{D_r}}e(t) \\ 
\end{aligned} \right.
\end{align}
where $A_r,B_r,C_r,D_r$ are the state space matrices of the base linear system and $A_\rho=\text{diag}(\gamma_1,...,\gamma_n)$ is called reset matrix. This matrix contains the reset coefficients for each state which are denoted by $\gamma_1,...,\gamma_n$. The controller's input and output are represented by $e(t)$ and $ u(t) $, respectively.
\subsection{$H_\beta$ condition}
The quadratic stability of the closed loop reset system when the base linear system is stable can be examined by the following condition~\cite{beker2004fundamental,guo2015analysis}.
\begin{thm}             
	There exists a constant $\beta \in \Re^{n_r\times 1}$ and positive definite matrix $P_\rho \in \Re^{n_r\times n_r}$, such that the restricted Lyapunov equation
	\begin{eqnarray}
	P > 0,\ &A_{cl}^TP + PA_{cl} &< 0\\
	&B_0^TP &= C_0
	\end{eqnarray}
	has a solution for $P$, where $C_0$ and $B_0$ are defined by
	\begin{align}
	C_0=\left[\begin{array}{ccc}
	\beta C_{p} & 0_{n_r \times n_{nr}} & P_\rho
	\end{array}\right] , & &  B_0=\left[\begin{array}{c}
	0_{n_{p} \times n_{r}}\\
	0_{n_{nr} \times n_{r}}\\
	I_{n_r}
	\end{array}\right].
	\end{align}
	and 
	\begin{equation}
	A_{\rho}^TP_\rho A_{\rho} - P_{\rho} \le 0
	\end{equation}
	where $A_{cl}$ is the closed loop A-matrix.
	$n_r$ is the number of states being reset and $n_{nr}$ being the number of non-resetting states and  ${n_{p}}$ is the number states for the plant.
	$A_p,B_p,C_p,D_p$ are the state space matrices of the plant.
\end{thm}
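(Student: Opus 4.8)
The statement is the standard $H_\beta$ quadratic-stability criterion, and what must actually be produced are the certificate objects $\beta \in \Re^{n_r \times 1}$, $P_\rho > 0$ and $P > 0$ satisfying the restricted Lyapunov system together with the reset-matrix inequality $A_\rho^T P_\rho A_\rho - P_\rho \le 0$. The plan is to establish their \emph{existence} by invoking the Kalman--Yakubovich--Popov (positive-real) lemma, taking as working hypothesis that the reset closed loop is quadratically stable; equivalently, that $A_{cl}$ is Hurwitz (which follows from stability of the base linear system) and that the transfer path $H_\beta(s) = C_0(sI - A_{cl})^{-1} B_0$ can be rendered strictly positive real (SPR) by an admissible choice of the free parameters. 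The first step is to record what quadratic stability demands: a single $V(x) = x^T P x$ with $P > 0$ that strictly decreases along the continuous flow $\dot x = A_{cl} x$ and does not increase across the reset jump, which acts only on the resetting block through $x_r \mapsto A_\rho x_r$. Continuous strict decrease is precisely $A_{cl}^T P + P A_{cl} < 0$.

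Next I would explain why the equality $B_0^T P = C_0$ is the correct structural constraint and that it can be met. Since $B_0 = [\,0\;\;0\;\;I_{n_r}\,]^T$ selects the resetting-state rows, $B_0^T P$ extracts the corresponding block row of $P$; pinning it to $C_0 = [\,\beta C_p \;\; 0 \;\; P_\rho\,]$ forces the resetting-state diagonal block of $P$ to equal $P_\rho$ and the cross-block between the resetting states and the plant to be determined by $\beta C_p$. This alignment is exactly what makes the jump cost computable: on the reset surface $e = 0$ the cross terms carrying $\beta C_p$ vanish, so that $V(x^+) - V(x) = x_r^T (A_\rho^T P_\rho A_\rho - P_\rho) x_r$, and the listed inequality then guarantees non-increase at jumps. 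With $A_\rho = \mathrm{diag}(\gamma_1,\dots,\gamma_n)$ and $|\gamma_i| \le 1$, the inequality $A_\rho^T P_\rho A_\rho - P_\rho \le 0$ can be secured for the $P_\rho$ that emerges below, so the reset-map condition is not a separate assumption but a consequence of the construction.

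The existence of a $P$ meeting the strict Lyapunov inequality and the equality $B_0^T P = C_0$ \emph{simultaneously} is then obtained from the positive-real lemma: strict positive realness of $H_\beta(s)$ is equivalent to solvability of exactly this pair, namely $A_{cl}^T P + P A_{cl} < 0$ together with $B_0^T P = C_0$, for some $P > 0$. I would verify the lemma's hypotheses in the reset setting ($A_{cl}$ Hurwitz, and minimality of the triple $(A_{cl}, B_0, C_0)$), then choose $\beta$ so that $H_\beta$ is SPR and read off $P_\rho$ as the resulting resetting-state block, checking $P_\rho > 0$ and $A_\rho^T P_\rho A_\rho - P_\rho \le 0$. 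This yields all three objects $\beta$, $P_\rho$ and $P$ as required by the statement.

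The main obstacle I anticipate is the simultaneous satisfaction of the strict inequality and the rank-deficient equality constraint $B_0^T P = C_0$: one must show that the affine set carved out by the equality still meets the open cone of strict Lyapunov solutions, which is the genuinely nontrivial content of the necessity direction of the positive-real lemma and relies on both the SPR margin and the minimality hypotheses. The delicate cases are precisely the degenerate ones: loss of observability through $C_0$, or reset coefficients $\gamma_i$ on the unit circle, which make the jump inequality non-strict and require a careful limiting argument to retain $P_\rho > 0$ and the non-increase property at resets.
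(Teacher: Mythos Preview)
The paper does not actually prove this theorem at all: it is quoted verbatim from the cited references (Beker et~al.\ and Guo et~al.) as a known quadratic-stability criterion and used only as a hypothesis to be checked numerically for the designed controllers. So there is no ``paper's own proof'' to compare your attempt against. Moreover, the intended reading of the statement is as a \emph{sufficient condition}: \emph{if} such $\beta$, $P_\rho$, $P$ exist, \emph{then} the closed-loop reset system is quadratically stable. It is not an unconditional existence claim, and its verification is system-dependent.

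That said, your sketch is essentially the classical argument from the original references: encode non-increase at jumps via the block structure $B_0^T P = C_0$, reduce the simultaneous strict Lyapunov inequality plus equality constraint to strict positive realness of $H_\beta(s) = C_0 (sI - A_{cl})^{-1} B_0$, and invoke the KYP lemma. Your derivation that on the reset surface (with zero exogenous input) the cross term $2(C_p x_p)^T \beta^T (A_\rho - I) x_r$ vanishes, leaving $V(x^+) - V(x) = x_r^T(A_\rho^T P_\rho A_\rho - P_\rho)x_r$, is correct and is exactly why the structural equality is imposed.

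There is one genuine gap. You assert that with $A_\rho = \mathrm{diag}(\gamma_1,\dots,\gamma_{n_r})$ and $|\gamma_i|\le 1$ the inequality $A_\rho^T P_\rho A_\rho - P_\rho \le 0$ ``can be secured for the $P_\rho$ that emerges'' from KYP. This is false in general for $n_r > 1$: a diagonal contraction need not be a contraction in an arbitrary positive-definite inner product. For instance, $A_\rho = \mathrm{diag}(1,-1)$ and $P_\rho = \begin{bmatrix}2&1\\1&2\end{bmatrix}$ give $A_\rho^T P_\rho A_\rho - P_\rho = \begin{bmatrix}0&-2\\-2&0\end{bmatrix}$, which is indefinite. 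The reset-matrix inequality is therefore an additional constraint that must be imposed on, not deduced from, the KYP certificate; in the literature it is treated as part of the hypothesis to be checked. In the single-resetting-state setting that this paper actually uses ($n_r = 1$), the issue disappears because $P_\rho$ is a positive scalar and the condition reduces to $|\gamma|\le 1$.
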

\subsection{Describing Functions}
Because of its nonlinearity, the steady state response of a reset element to a sinusoidal input is not sinusoidal. Thus, its frequency response should be analysed through approximations like Describing Function (DF) method~\cite{guo2009frequency}. However, the DF method only takes the first harmonic of Fourier series decomposition of the output into account and neglects the effects of the higher order harmonics. As shown in~\cite{karbasizadeh2020benefiting}, this simplification can sometimes be significantly inaccurate. To have more accurate  information about the frequency response of nonlinear systems, a method called ``Higher Order Sinusoidal Input Describing Function'' (HOSIDF) has been introduced in~\cite{nuij2006higher}. This method was developed in \cite{kars2018HOSIDF,dastjerdi2020closed} for reset elements defined by Eq.~(\ref{eq:reset}) as follows:
\begin{align}  \nonumber
\label{eq:hosidf}
& G_n(\omega)=\left\{ \begin{aligned}
& C_r{{(j\omega I-A_r)}^{-1}}(I+j{{\Theta }}(\omega ))B_r+D_r,\quad n=1\\ 
& C_r{{(j\omega nI-A_r)}^{-1}}j{{\Theta }}(\omega )B_r,\quad~~\qquad\text{odd }n> 2\\ 
& 0,\qquad\qquad\qquad\qquad\qquad\qquad\qquad~\text{even }n\ge 2\\ 
\end{aligned} \right. \\
&\begin{aligned}
& {{\Theta }}(\omega )=-\frac{2{{\omega }^{2}}}{\pi }\Delta (\omega )[{{\Gamma }}(\omega )-{{\Lambda }^{-1}}(\omega )] \\  
& \Lambda (\omega )={{\omega }^{2}}I+{{A_r}^{2}} \\  
& \Delta (\omega )=I+{{e}^{\frac{\pi }{\omega }A_r}} \\  
& {{\Delta }_{\rho}}(\omega )=I+{{A}_{\rho}}{{e}^{\frac{\pi }{\omega }A_r}} \\  
& {{\Gamma }}(\omega )={\Delta }_{\rho}^{-1}(\omega ){{A}_{\rho}}\Delta (\omega ){{\Lambda }^{-1}}(\omega ) \\
\end{aligned} 
\end{align}
where $G_n(\omega)$ is the $n^{\text{th}}$ harmonic describing function for sinusoidal input with frequency of $\omega$. \\
According to definition of reset element in open-loop, assuming $e(t)=\sin(\omega t)$, the reset instants will be $t_k=\frac{k \pi }{\omega}$. However, if by changing the architecture of reset element, one can manage to change the resetting condition to the following:
\begin{align}
 \label{eq:reset_phi}
 {{\sum }_{R}}:=\left\{ \begin{aligned}
 & {{{\dot{x}}}_{r}}(t)={{A_r}}{{x}_{r}}(t)+{{B_r}}e(t),&\text{if }\sin(\omega t - \varphi)\ne 0\\ 
 & {{x}_{r}}({{t}^{+}})={{A}_{\rho }}{{x}_{r}}(t),&\text{if }\sin(\omega t - \varphi)=0 \\ 
 & u(t)={{C_r}}{{x}_{r}}(t)+{{D_r}}e(t), \\ 
 \end{aligned} \right.
\end{align}
in other words, if one changes the reset instants, $t_k$,  while maintaining the input itself, the HOSIDF will change to~\cite{dastjerdi2020closed}:
\begin{align}  \nonumber
\label{eq:hosidf_phi}
&G_{\varphi n}(\omega)=\left\{ \begin{aligned}
&C_r{{(A_r-j\omega I)}^{-1}}{{\Theta }_{\varphi}}(\omega )&\\
&\qquad+C_r(j\omega I-A_r)^{-1}B_r+D_r,& n=1\\ 
& C_r{{(A_r-j\omega nI)}^{-1}}{{\Theta }_{\varphi}}(\omega ),&\text{odd }n> 2\\ 
& 0&\text{ even }n\ge 2\\ 
\end{aligned} \right. \\
&\begin{aligned}
&{{\Theta }_{\varphi }}(\omega )=\frac{-2j\omega {{e}^{-j\varphi }}}{\pi } \Omega (\omega ) \left( \omega I\cos (\varphi )+A_r\sin (\varphi ) \right){{\Lambda }^{-1}}(\omega )B\\
&\Omega (\omega )=\Delta (\omega )-\Delta (\omega )\Delta _{\rho}^{-1}(\omega ){{A}_{\rho }}\Delta (\omega ).\\
\end{aligned}
\end{align}
It will be discussed later in the paper that Eq.~(\ref{eq:hosidf_phi}) can be used to obtain the HOSIDF of proposed architecture. 
\subsection{CgLp}
According to~\cite{saikumar2019constant}, CgLp is a broadband phase compensation element whose first harmonic gain behaviour is constant while providing a phase lead. Originally, two architectures for CgLp are suggested using FORE or SORE, both consisting in a reset lag element in series with a linear lead filter, namely ${\sum}_R$ and $D$. For FORE CgLp:
\begin{align}
\label{eq:fore}
&{\sum}_R=\cancelto{A_\rho}{\frac{1}{{s}/{{{\omega }_{r\alpha }}+1}\;}},&D(s)=\frac{{s}/{{{\omega }_{r}}}\;+1}{{s}/{{{\omega }_{f}}}\;+1}
\end{align}
For SORE CgLp:
\begin{equation}
\label{eq:sore}
\begin{aligned}
&{\sum}_R=\cancelto{A_\rho}{\frac{1}{({s}/{{{\omega }_{r\alpha }}{{)}^{2}}+(2s{{{\beta }}}/{{{\omega }_{r\alpha }})}\;+1}\;}}\\ \\
&D(s)=\frac{({s}/{{{\omega }_{r}}{{)}^{2}}+(2s{{{\beta }}}/{{{\omega }_{r}})}\;+1}\;}{({s}/{{{\omega }_{f}}{{)}^{2}}+(2s{{}}/{{{\omega }_{f}})}\;+1}\;}    
\end{aligned}
\end{equation}
In~(\ref{eq:fore}) and~(\ref{eq:sore}), $\omega_{r\alpha}=\omega_r/\alpha$, $\alpha$ is a tuning parameter accounting for a shift in corner frequency of the filter due to resetting action, $\beta$ is the damping coefficient and $[\omega_{r},\omega_{f}]$ is the frequency range where the CgLp will provide the required phase lead. The arrow indicates that the states of element are reset according to $A_\rho$; i.e. are multiplied by $A_\rho$ when the reset condition is met.\\
The main idea behind the CgLp is taking the phase advantage of reset lag element over its linear counter part and use it in combination with a corresponding lead element to create broadband phase lead. Ideally, the gain of the reset lag element should be cancelled out by the gain of the corresponding linear lead element, which creates a constant gain behaviour. The concept is depicted in Fig.~\ref{fig:cglp}.\\
\begin{figure}[t!]
	\centering
	\includegraphics[width=\columnwidth]{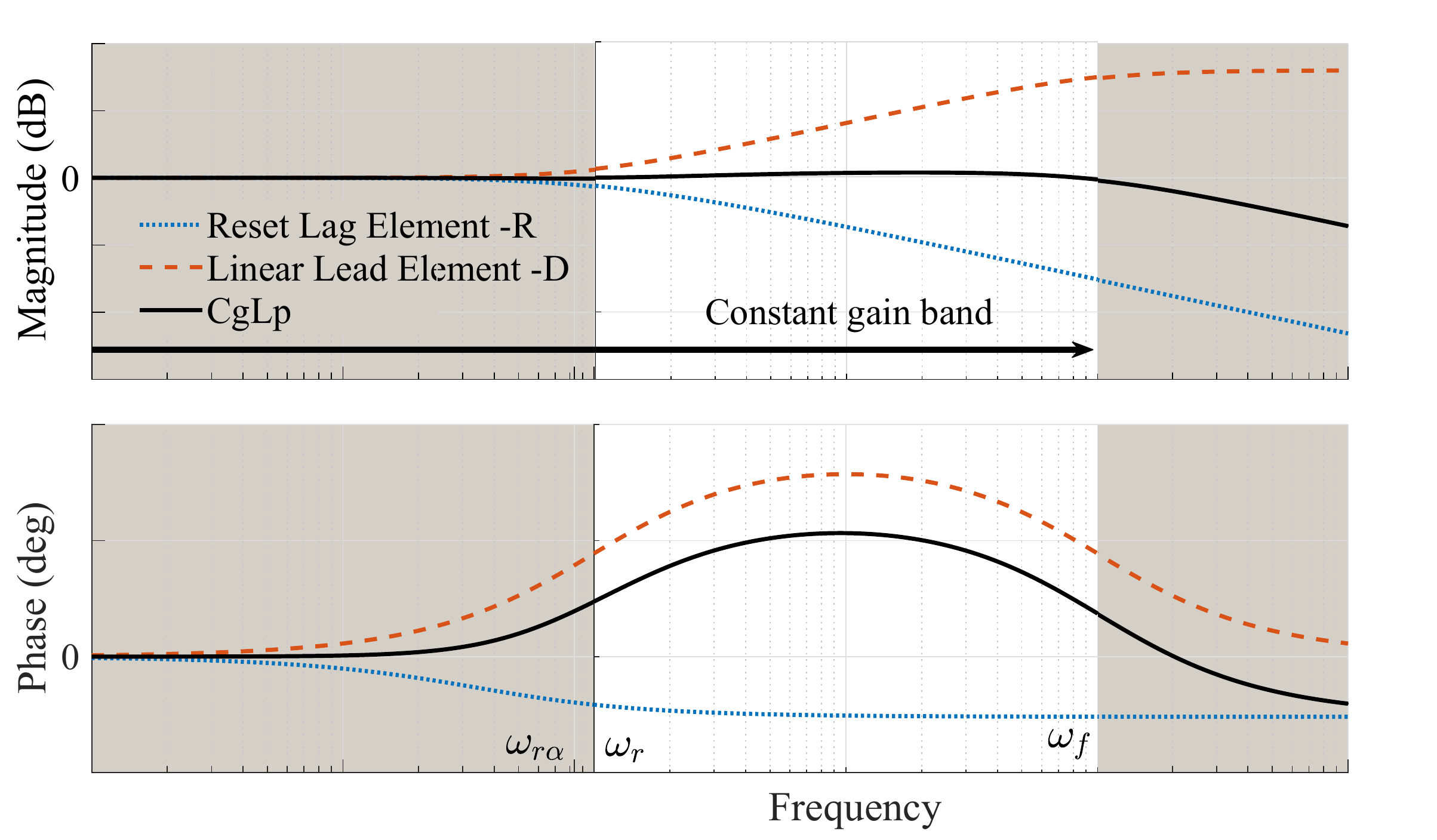}
	\caption{The concept of using combination of a reset lag and a linear lead element to form a CgLp element. The figure is  from~\cite{saikumar2019constant}.}
	\label{fig:cglp}
\end{figure}
\section{A Single State Reset Element Including a Shaping Filter}
This paper proposes an architecture for reset elements with only one resetting state including a shaping filter. This section will analyse the HOSIDF of such an element and its specific properties. The block diagram of the proposed element is presented in Fig.~\ref{fig:ssre_block}.\\
	\begin{figure}[t!]
		\begin{subfigure}{\columnwidth}
			\centering
		\resizebox{0.5\columnwidth}{!}{

			\tikzset{every picture/.style={line width=0.75pt}} 
			
			\begin{tikzpicture}[x=0.75pt,y=0.75pt,yscale=-1,xscale=1]
			
			\draw  [fill={rgb, 255:red, 0; green, 0; blue, 0 }  ,fill opacity=0.05 ][dash pattern={on 4.5pt off 4.5pt}] (57.81,110.95) .. controls (57.81,100.52) and (66.27,92.06) .. (76.7,92.06) -- (180.41,92.06) .. controls (190.84,92.06) and (199.3,100.52) .. (199.3,110.95) -- (199.3,181.11) .. controls (199.3,191.54) and (190.84,200) .. (180.41,200) -- (76.7,200) .. controls (66.27,200) and (57.81,191.54) .. (57.81,181.11) -- cycle ;
			\draw   (109.37,127) -- (163.13,127) -- (163.13,171.22) -- (109.37,171.22) -- cycle ;
			\draw    (98.5,183) -- (171.84,112.09) ;
			\draw [shift={(174,110)}, rotate = 495.96] [fill={rgb, 255:red, 0; green, 0; blue, 0 }  ][line width=0.08]  [draw opacity=0] (10.72,-5.15) -- (0,0) -- (10.72,5.15) -- (7.12,0) -- cycle    ;
			
			\draw    (51.5,147.33) -- (82.49,147.49) -- (106.3,147.94) ;
			\draw [shift={(109.3,148)}, rotate = 181.09] [fill={rgb, 255:red, 0; green, 0; blue, 0 }  ][line width=0.08]  [draw opacity=0] (8.93,-4.29) -- (0,0) -- (8.93,4.29) -- cycle    ;
			\draw  [dash pattern={on 4.5pt off 4.5pt}]  (82.49,147.49) -- (82.49,165.18) ;
			\draw  [dash pattern={on 4.5pt off 4.5pt}]  (82.49,165.18) -- (109.3,165.2) ;
			\draw    (164.3,147) -- (225.5,147.33) ;
			
			\draw (83,134) node    {$e( t)$};
			\draw (187,104) node  [font=\Large]  {$\gamma $};
			\draw (217,133) node    {$u( t)$};
			\draw (136.25,149.5) node [font=\large]  {$\frac{1}{( s/\omega _{r} +1)}$};

			\end{tikzpicture}
		}
			\caption{A conventional FORE}
			\label{fig:fore_block}
		\end{subfigure}
	~\\~\\
	\begin{subfigure}{\columnwidth}
		\centering
		\resizebox{\columnwidth}{!}{

		\tikzset{every picture/.style={line width=0.75pt}} 
		
		\begin{tikzpicture}[x=0.75pt,y=0.75pt,yscale=-1,xscale=1]
		
		\draw  [fill={rgb, 255:red, 0; green, 0; blue, 0 }  ,fill opacity=0.05 ][dash pattern={on 4.5pt off 4.5pt}] (37.81,89.95) .. controls (37.81,79.52) and (46.27,71.06) .. (56.7,71.06) -- (383.41,71.06) .. controls (393.84,71.06) and (402.3,79.52) .. (402.3,89.95) -- (402.3,160.11) .. controls (402.3,170.54) and (393.84,179) .. (383.41,179) -- (56.7,179) .. controls (46.27,179) and (37.81,170.54) .. (37.81,160.11) -- cycle ;
		\draw   (249.37,107) -- (303.13,107) -- (303.13,151.22) -- (249.37,151.22) -- cycle ;
		\draw    (238.5,163) -- (311.84,92.09) ;
		\draw [shift={(314,90)}, rotate = 495.96] [fill={rgb, 255:red, 0; green, 0; blue, 0 }  ][line width=0.08]  [draw opacity=0] (10.72,-5.15) -- (0,0) -- (10.72,5.15) -- (7.12,0) -- cycle    ;
		
		\draw    (31.5,127.33) -- (62.49,127.49) -- (86.3,127.94) ;
		\draw [shift={(89.3,128)}, rotate = 181.09] [fill={rgb, 255:red, 0; green, 0; blue, 0 }  ][line width=0.08]  [draw opacity=0] (8.93,-4.29) -- (0,0) -- (8.93,4.29) -- cycle    ;
		\draw   (89.5,108.33) -- (145.49,108.33) -- (145.49,152.22) -- (89.5,152.22) -- cycle ;
		\draw    (222.3,129) -- (250.3,129) ;
		\draw    (303.3,128) -- (338.3,128) ;
		\draw  [dash pattern={on 4.5pt off 4.5pt}]  (62.49,127.49) -- (62.5,162.86) ;
		\draw  [dash pattern={on 4.5pt off 4.5pt}]  (62.5,162.86) -- (227.5,161.86) -- (227.5,141.86) -- (247.5,141.86) ;
		\draw   (338.5,106.33) -- (385.5,106.33) -- (385.5,150.22) -- (338.5,150.22) -- cycle ;
		\draw    (385.3,128) -- (446.5,128.33) ;
		\draw   (167.37,107) -- (221.13,107) -- (221.13,151.22) -- (167.37,151.22) -- cycle ;
		\draw    (146.3,129) -- (167.3,129) ;
		
		\draw (63,114) node    {$e( t)$};
		\draw (236,117) node    {$x_{2}$};
		\draw (322,86) node [font=\Large]   {$\gamma $};
		\draw (118,128.28) node  [font=\Large]  {$F( s)$};
		\draw (322,116.67) node    {$x_{1}$};
		\draw (427,113) node    {$u( t)$};
		\draw (276.25,130.5) node  [font=\large]  {$\frac{1}{( s/\omega _{r} +1)}$};
		\draw (276,95) node [font=\large] {${\sum}_R$};
		\draw (118,91) node  [font=\footnotesize] [align=left] {\begin{minipage}[lt]{32.674pt}\setlength\topsep{0pt}
			\begin{flushright}
			\textit{Shaping}
			\end{flushright}
			\begin{center}
			\textit{Filter}
			\end{center}
			
			\end{minipage}};
		\draw (194.25,129.11) node  [font=\Large]  {$K( s)$};
		\draw (362,128.28) node  [font=\Large]  {$T( s)$};

		\end{tikzpicture}	
		}
		\caption{A single state reset element including a shaping filter}
		\label{fig:ssre_block}
	\end{subfigure}
	\caption{Block diagrams of a conventional FORE and a single state reset element including a shaping filter proposed by this paper.}
	\label{fig:block_diagrams}
\end{figure}
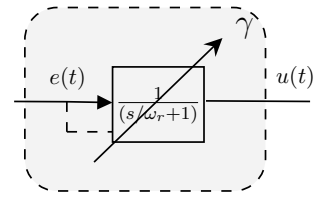
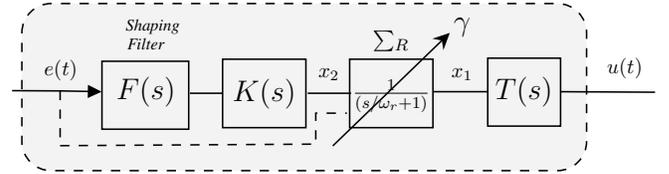
HOSIDF of this element can be found using Eq.~(\ref{eq:hosidf}) with
\begin{equation*}
	{{A}_{\rho }}=\text{diag}(\underbrace{1,...,1}_{{{n}_{F}}+{{n}_{K}}},\gamma ,\underbrace{1,...,1}_{{{n}_{T}}})
\end{equation*} 
where ${{n}_{F}},{{n}_{K}}$ and ${{n}_{T}}$ are number of states for $F(s),K(s)$ and $T(s)$, respectively.\\
However, in this paper, Eq.~(\ref{eq:hosidf_phi}) will be used, since it will reveal more useful information. Let's define:
\begin{equation}
\psi(\omega) :=\angle \frac{X_1(j\omega)}{E(j\omega)}\quad \text{for } \gamma=1.
\end{equation}
It it to be noted that $\psi$ is defined in linear context and is based on base linear system.
\begin{thm}
	\label{thm:psi}
	The higher-order harmonics of the architecture in Fig.~\ref{fig:ssre_block} is an exponential function of $\psi(\omega)$.
	\begin{equation}
		H_n(\omega) = f(n,\omega) (1-e^{-j2\psi})
	\end{equation}
where $f(n,\omega)$ is defined by Eq.~(\ref{eq:g_phi_n}).
\end{thm}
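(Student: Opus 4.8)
\emph{Proof strategy.} The plan is to read Fig.~\ref{fig:ssre_block} as a linear pre-filter $K(s)F(s)$, followed by the single reset block ${\sum}_R$ (the FORE $1/(s/\omega_r+1)$ whose state carries the coefficient $\gamma$), followed by a linear post-filter $T(s)$, with the resetting triggered by the zeros of $e$. I would then apply the shifted-reset-instant HOSIDF of Eq.~(\ref{eq:hosidf_phi}) to the block ${\sum}_R$ alone and show that, for a single resetting state, the harmonic-generating term $\Theta_\varphi$ collapses into a factor independent of $\psi$ times $(1-e^{-j2\psi})$; everything else in the element is linear and only feeds into $f(n,\omega)$.

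First I would fix the effective reset phase. Since $F$ and $K$ are linear, the steady-state input of ${\sum}_R$ to $e(t)=\sin\omega t$ is the pure sinusoid $x_2(t)=|K(j\omega)F(j\omega)|\sin(\omega t+\varphi)$ with $\varphi=\angle\big(K(j\omega)F(j\omega)\big)$, whereas the reset instants are $t_k=k\pi/\omega$, the zeros of $e$. Shifting time by $\varphi/\omega$ puts ${\sum}_R$ exactly into the situation of Eq.~(\ref{eq:reset_phi}) with shift $\varphi$, so its $n$-th harmonic from $x_2$ to $x_1$ is $G_{\varphi n}(\omega)$; by linearity of $F,K,T$ the $n$-th harmonic of the whole element is then $T(jn\omega)\,e^{j(n-1)\varphi}\,K(j\omega)F(j\omega)\,G_{\varphi n}(\omega)$ for odd $n>2$, the factor $e^{j(n-1)\varphi}$ merely restoring the harmonic's phase reference after the time shift. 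Evaluating the definition of $\psi$ on the $\gamma=1$ forward path $e\to x_1$ gives the relation I will need, $\psi(\omega)=\varphi+\angle\tfrac{1}{j\omega/\omega_r+1}=\varphi-\arctan(\omega/\omega_r)$.

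Next I would specialise Eq.~(\ref{eq:hosidf_phi}) to the single-resetting-state case. Then $A_r=-\omega_r$ and $A_\rho=\gamma$ are scalars, $\Lambda=\omega^2+\omega_r^2$, $\Delta=1+e^{-\pi\omega_r/\omega}$ and $\Delta_\rho=1+\gamma e^{-\pi\omega_r/\omega}$ are real and commute, and the identity $\Delta_\rho-\gamma\Delta=1-\gamma$ turns $\Omega=\Delta-\Delta\Delta_\rho^{-1}A_\rho\Delta$ into $\Omega=(1-\gamma)\Delta/\Delta_\rho$. The only $\varphi$-dependent part of $\Theta_\varphi$ is then $e^{-j\varphi}\big(\omega\cos\varphi+A_r\sin\varphi\big)$; using $\omega\cos\varphi-\omega_r\sin\varphi=-\sqrt{\omega^2+\omega_r^2}\,\sin\!\big(\varphi-\arctan(\omega/\omega_r)\big)=-\sqrt{\omega^2+\omega_r^2}\,\sin\psi$ together with $e^{-j\varphi}\sin\psi=\tfrac{1}{2j}\,e^{-j\arctan(\omega/\omega_r)}\big(1-e^{-j2\psi}\big)$ shows $\Theta_\varphi(\omega)=(\text{a factor free of }\psi)\cdot\big(1-e^{-j2\psi}\big)$. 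Forming $G_{\varphi n}=C_r(A_r-jn\omega)^{-1}\Theta_\varphi$ and restoring the linear factors $T(jn\omega)$, $e^{j(n-1)\varphi}$ and $K(j\omega)F(j\omega)$, I would collect everything independent of $\psi$ into the coefficient $f(n,\omega)$ of Eq.~(\ref{eq:g_phi_n}), which gives $H_n(\omega)=f(n,\omega)\big(1-e^{-j2\psi}\big)$.

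The step I expect to be the main obstacle is the first one: arguing cleanly that prepending the linear filters to a reset element is, in steady state, equivalent to a reset-instant shift $\varphi=\angle(KF)$ in the exact sense of Eqs.~(\ref{eq:reset_phi})--(\ref{eq:hosidf_phi}), and keeping the harmonic phase bookkeeping (the $e^{j(n-1)\varphi}$ term) correct. The remaining work---the scalar reduction through $\Delta_\rho-\gamma\Delta=1-\gamma$ and the single trigonometric identity turning $e^{-j\varphi}(\omega\cos\varphi+A_r\sin\varphi)$ into a multiple of $1-e^{-j2\psi}$---is routine but must be carried out carefully with the signs of $A_r=-\omega_r$ and of $\varphi$.
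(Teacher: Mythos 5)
Your proposal follows essentially the same route as the paper's proof: decompose the element into the linear pre-filter $Q=FK$, the single-state reset block analysed through the shifted-reset-instant HOSIDF of Eq.~(\ref{eq:hosidf_phi}) with $\varphi=\angle Q=\psi+\tan^{-1}(\omega/\omega_r)$, and the linear post-filter $T$, then reduce $\Omega$ to $(1-\gamma)\delta/\delta_\rho$ in the scalar case and use the trigonometric identity that turns $e^{-j\varphi}(\omega\cos\varphi+A_r\sin\varphi)$ into a $\psi$-independent factor times $(1-e^{-j2\psi})$, exactly reproducing Eq.~(\ref{eq:g_phi_n}). The only divergence is your extra unimodular bookkeeping factor $e^{j(n-1)\varphi}$ for $n>1$ (the paper simply writes $H_n=Q(j\omega)G_{\varphi n}(\omega)T(jn\omega)$); whether it appears depends on the harmonic phase-reference convention inherited from the cited HOSIDF result, and since $|e^{j(n-1)\varphi}|=1$ it changes neither the harmonic magnitudes nor the $(1-e^{-j2\psi})$ structure the theorem asserts.
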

\begin{proof}
	Let's temporarily denote $Q(s) := F(s)K(s)$. For HOSIDF analysis, $e(t)=\sin(\omega t)$; thus,
	\begin{align}
		x_2(t)=\left| Q(j\omega)\right| \sin(\omega t+\phi(\omega))
	\end{align}
	where $\phi(\omega)=\angle Q(j\omega)$. From block diagram of Fig.~\ref{fig:ssre_block} we have
	\begin{equation}
	\label{eq:phi_omega}
		\phi(\omega)=\psi(\omega)+\tan^{-1}(\frac{\omega}{\omega_{r}})
	\end{equation}
	It can be readily seen that input to the ${\sum}_R$ is $x_2(t)$ while the resetting condition is determined by $e(t)$, which have a phase difference. One can find the $n^\text{th}$ harmonic, $H_n(\omega)$, by the following equation:
	\begin{align}
		H_n(\omega)&=Q(j\omega)G_{\varphi n} (\omega)T(jn\omega)\\
		\varphi&=\psi(\omega)+\tan^{-1}(\frac{\omega}{\omega_{r}})
	\end{align}
	where $G_{\varphi n}$ can be obtained using Eq.~(\ref{eq:hosidf_phi}). For $\sum_{R}$, in this paper we have
	\begin{equation}
	\label{eq:r_matrices}
		A_r=-\omega_r, B_r=\omega_{r}, C_r=1,D_r=0, A_\rho=\gamma.
	\end{equation}
	Using Eqs.~(\ref{eq:hosidf_phi}),~(\ref{eq:phi_omega}) and~(\ref{eq:r_matrices}), after some simplifications we have
	\begin{align}
	\label{eq:g_phi_n}
	\begin{aligned}
		{{G}_{\varphi n}}(\omega )&=\left\{ \begin{aligned}
		&f(n,\omega )\left( 1-{{e}^{-j2\psi }} \right) \\
		&\qquad+{{C}_{r}}{{\left( j\omega I-{{A}_{r}} \right)}^{-1}}{{B}_{r}}+{{D}_{r}},  &n=1\\
		&f(n,\omega )\left( 1-{{e}^{-j2\psi }} \right),  &\text{odd }n>2\\
		&0 &\text{even }n\ge 2 \\
		\end{aligned} \right.\\	
		f(n,\omega)&=C_r{{(A_r-j\omega nI)}^{-1}}\frac{\omega {{e}^{-j{{\tan }^{-1}}(\frac{\omega }{{{\omega }_{r}}})}}}{\pi \sqrt{1+\left(\frac{\omega }{{{\omega }_{r}}}\right)^{2}}}\left( 1-\gamma  \right) \delta_\rho^{-1}\delta\\
	\end{aligned}
	\end{align}
	where 
	\begin{align}\nonumber
		\begin{aligned}
			\delta&=\left( 1+{{e}^{\frac{-\pi \omega_r }{{{\omega }}}}} \right)\\
			\delta_\rho&={{\left( 1+\gamma {{e}^{\frac{-\pi \omega_r }{{{\omega }}}}} \right)}}.
		\end{aligned}
	\end{align}
\end{proof}
\begin{rem}
	\label{rem:1}
	Let's define 
	\begin{equation}
		{{\omega }_{lb}}:=\{\omega |\psi (\omega )=0\}.
	\end{equation}
	According to Eqs.~(\ref{eq:g_phi_n}) and~(\ref{eq:hosidf_phi}), 
	\begin{equation}
	G_{\varphi n}(\omega_{lb})=\left\{ \begin{aligned}
	&C(j\omega_{lb} I-A_r)^{-1}B_r+D_r,& n=1\\ 
	& 0&n\ge 2\\ 
	\end{aligned} \right. \\
	\end{equation} 
\end{rem}
Remark~\ref{rem:1} shows that for each frequency in $\omega_{lb}$, all the higher-order harmonics will be zero, in other words, the element will act as its base linear system in terms of steady state output. By way of explanation, when $\psi$ is zero, the reset element, $\sum_{R}$, will reset its state to zero when the state value is already zero. Hence, the resetting action has no effect in steady state response. Figure~\ref{fig:no_phase_shift} depicts this situation. Obviously, in this situation, there exists no phase advantage for the reset element.

\begin{figure}[t!]
	\centering
	\includegraphics[width=\columnwidth]{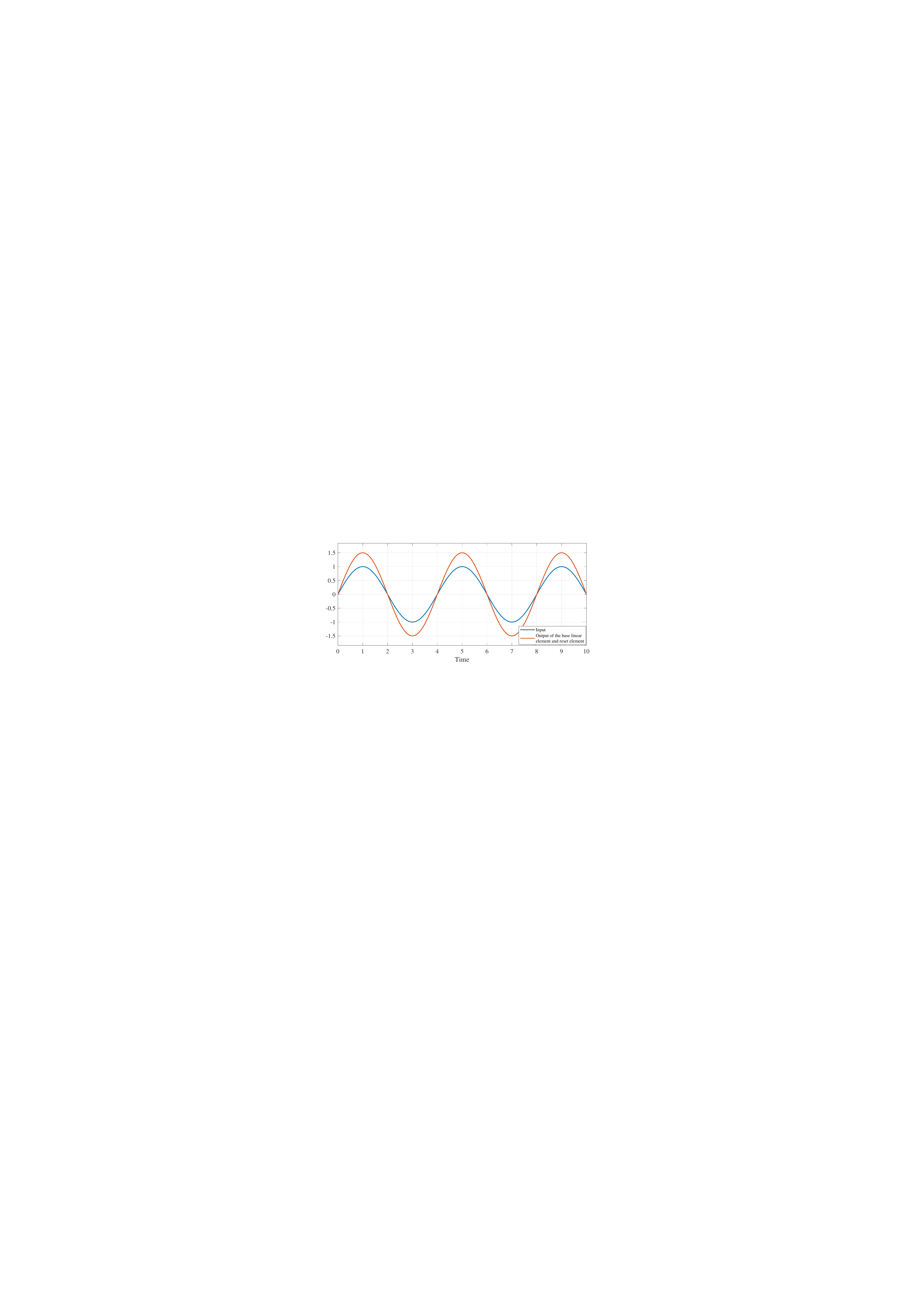}
	\caption{Assuming that the output of the base linear element ($x_1(t)$ for $\gamma=1$) for a reset element has no phase shift with respect to its input ($e(t)$), the output of the reset element itself ($x_1(t)$ for $\gamma\neq 1$) will match the base linear element output at steady state.}
	\label{fig:no_phase_shift}
\end{figure}
\begin{rem}
	For a fixed value of $\omega,\omega_{r}$ and $\gamma$, the maximum of higher-order harmonics magnitude will happen when $\psi(\omega)=\frac{(k+1)\pi}{2}, k\in \mathbb{Z}$.
\end{rem}
\begin{rem}
	\label{rem:phase}
	For $\omega > 10 \omega_r$, the phase of the first-harmonic of the reset element can be approximated by
	\begin{equation}
	\label{eq:phase_psi}
	 \angle {{G}_{\varphi 1}}(\omega )\approx{{\tan }^{-1}}\left( \frac{U\sin(2\psi)-1}{2U\sin^2(\psi)} \right)
	\end{equation}
	where
	\begin{equation}\nonumber
		U=\frac{2  (1-\gamma )}{\pi (1+\gamma )}.
	\end{equation}
	Thus, the phase of the first-harmonic of the reset element for $\omega>10\omega_r$, only depends on $\psi$ and $\gamma$.
\end{rem}
Remark~\ref{rem:phase} implies that at a frequency which is at least one decade higher than $\omega_r$, different combinations of $\psi$ and $\gamma$ may result in the same first-harmonics phase for the reset element. Meanwhile, $\psi$ and $\gamma$ also affect the higher-harmonics magnitude as indicated in Eq.~(\ref{eq:g_phi_n}). Thus solving an optimisation problem, one can find the best combination of $\psi$ and $\gamma$ for a desired first-harmonic phase and minimum higher-order harmonics magnitude.
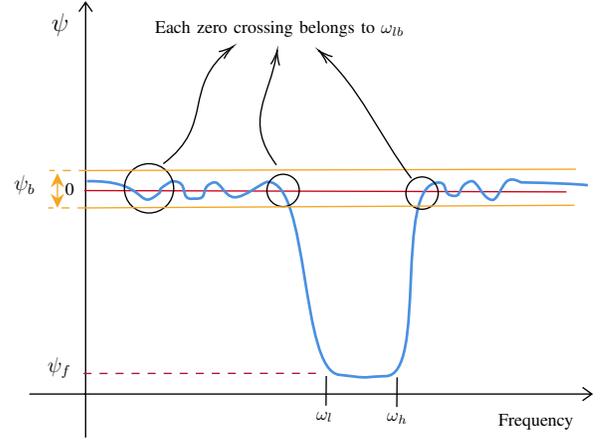
\begin{figure}[t!]
	\centering
	\resizebox{0.9\columnwidth}{!}{
		
		\tikzset{every picture/.style={line width=0.75pt}} 
		
		\begin{tikzpicture}[x=0.75pt,y=0.75pt,yscale=-1,xscale=1]
		
		\draw  (130.79,309.51) -- (551.79,309.51)(172.89,16.11) -- (172.89,342.11) (544.79,304.51) -- (551.79,309.51) -- (544.79,314.51) (167.89,23.11) -- (172.89,16.11) -- (177.89,23.11)  ;
		\draw [color={rgb, 255:red, 208; green, 2; blue, 27 }  ,draw opacity=1 ]   (171.79,157.11) -- (532.79,158.11) ;
		\draw [color={rgb, 255:red, 74; green, 144; blue, 226 }  ,draw opacity=1 ][line width=1.5]    (173.79,150.11) .. controls (217.79,149.11) and (211.86,169.93) .. (223.79,162.11) .. controls (235.72,154.29) and (220.28,160.11) .. (237.28,150.11) .. controls (252.28,149.78) and (241.28,165.11) .. (254.28,163.11) .. controls (266.28,164.11) and (257.26,154.74) .. (268.28,151.11) .. controls (273.79,149.3) and (276.41,161.79) .. (284.28,162.11) .. controls (292.15,162.44) and (305.28,150.6) .. (309.2,150.6) .. controls (340.53,150.6) and (333.3,292.11) .. (362.79,295.11) .. controls (390.28,298.11) and (372.28,297.11) .. (397.28,296.11) .. controls (429.28,296.11) and (402.79,152.11) .. (436.79,151.11) .. controls (445.29,150.86) and (438.52,161.13) .. (449.28,161.11) .. controls (454.66,161.11) and (451.51,152.67) .. (461.28,150.11) .. controls (466.17,148.84) and (469.71,163.59) .. (477.28,164.11) .. controls (481.07,164.38) and (484.09,153.47) .. (488.28,150.11) .. controls (490.37,148.44) and (492.76,148.65) .. (494.79,149.29) .. controls (496.82,149.92) and (498.49,150.98) .. (499.17,150.99) .. controls (520.79,151.36) and (540.29,152.11) .. (548.79,153.11) ;
		\draw   (201.89,155.45) .. controls (201.89,145.26) and (210.15,137) .. (220.34,137) .. controls (230.53,137) and (238.79,145.26) .. (238.79,155.45) .. controls (238.79,165.64) and (230.53,173.91) .. (220.34,173.91) .. controls (210.15,173.91) and (201.89,165.64) .. (201.89,155.45) -- cycle ;
		\draw   (308.59,157.01) .. controls (308.59,150.33) and (314,144.91) .. (320.69,144.91) .. controls (327.37,144.91) and (332.79,150.33) .. (332.79,157.01) .. controls (332.79,163.69) and (327.37,169.11) .. (320.69,169.11) .. controls (314,169.11) and (308.59,163.69) .. (308.59,157.01) -- cycle ;
		\draw   (412.59,159.01) .. controls (412.59,152.33) and (418,146.91) .. (424.69,146.91) .. controls (431.37,146.91) and (436.79,152.33) .. (436.79,159.01) .. controls (436.79,165.69) and (431.37,171.11) .. (424.69,171.11) .. controls (418,171.11) and (412.59,165.69) .. (412.59,159.01) -- cycle ;
		\draw [color={rgb, 255:red, 245; green, 166; blue, 35 }  ,draw opacity=1 ]   (173.79,142.11) -- (539.79,141.11) ;
		\draw [color={rgb, 255:red, 245; green, 166; blue, 35 }  ,draw opacity=1 ]   (172.79,170.11) -- (538.79,168.11) ;
		\draw [color={rgb, 255:red, 245; green, 166; blue, 35 }  ,draw opacity=1 ]   (151.79,147.11) -- (151.79,167.11) ;
		\draw [shift={(151.79,170.11)}, rotate = 270] [fill={rgb, 255:red, 245; green, 166; blue, 35 }  ,fill opacity=1 ][line width=0.08]  [draw opacity=0] (10.72,-5.15) -- (0,0) -- (10.72,5.15) -- (7.12,0) -- cycle    ;
		\draw [shift={(151.79,144.11)}, rotate = 90] [fill={rgb, 255:red, 245; green, 166; blue, 35 }  ,fill opacity=1 ][line width=0.08]  [draw opacity=0] (10.72,-5.15) -- (0,0) -- (10.72,5.15) -- (7.12,0) -- cycle    ;
		\draw [color={rgb, 255:red, 245; green, 166; blue, 35 }  ,draw opacity=1 ] [dash pattern={on 4.5pt off 4.5pt}]  (173.79,142.11) -- (145.79,142.11) ;
		\draw [color={rgb, 255:red, 245; green, 166; blue, 35 }  ,draw opacity=1 ] [dash pattern={on 4.5pt off 4.5pt}]  (172.79,170.11) -- (144.79,170.11) ;
		\draw    (231,137) .. controls (270.6,107.3) and (243.83,79.67) .. (282.1,50.01) ;
		\draw [shift={(283.28,49.11)}, rotate = 503.13] [color={rgb, 255:red, 0; green, 0; blue, 0 }  ][line width=0.75]    (10.93,-3.29) .. controls (6.95,-1.4) and (3.31,-0.3) .. (0,0) .. controls (3.31,0.3) and (6.95,1.4) .. (10.93,3.29)   ;
		\draw    (315.69,137.91) .. controls (290.66,104.62) and (310.66,87.63) .. (316.05,53.68) ;
		\draw [shift={(316.28,52.11)}, rotate = 458.13] [color={rgb, 255:red, 0; green, 0; blue, 0 }  ][line width=0.75]    (10.93,-3.29) .. controls (6.95,-1.4) and (3.31,-0.3) .. (0,0) .. controls (3.31,0.3) and (6.95,1.4) .. (10.93,3.29)   ;
		\draw    (416.69,147.91) .. controls (391.66,114.62) and (378.67,80.16) .. (348.66,53.33) ;
		\draw [shift={(347.28,52.11)}, rotate = 401.05] [color={rgb, 255:red, 0; green, 0; blue, 0 }  ][line width=0.75]    (10.93,-3.29) .. controls (6.95,-1.4) and (3.31,-0.3) .. (0,0) .. controls (3.31,0.3) and (6.95,1.4) .. (10.93,3.29)   ;
		\draw [color={rgb, 255:red, 170; green, 18; blue, 74 }  ,draw opacity=1 ] [dash pattern={on 4.5pt off 4.5pt}]  (171.28,294.11) -- (348,294) ;
		\draw    (353,296) -- (352.8,318.2) ;
		\draw    (406,297) -- (405.8,319.2) ;
		
		\draw (161,156) node   [align=left] {0};
		\draw (510,330) node   [align=left] {Frequency};
		\draw (154,33) node  [font=\Large]  {$\psi $};
		\draw (318,36) node   [align=left] {Each zero crossing belongs to $\displaystyle \omega _{lb}$};
		\draw (143,280.4) node [anchor=north west][inner sep=0.75pt] [font=\large]   {$\psi _{f}$};
		\draw (344,321.4) node [anchor=north west][inner sep=0.75pt]    {$\omega _{l}$};
		\draw (397,322.4) node [anchor=north west][inner sep=0.75pt]    {$\omega _{h}$};
		\draw (118,144.4) node [anchor=north west][inner sep=0.75pt]  [font=\large]  {$\psi _{b}$};

		\end{tikzpicture}
	}
	\caption{Desired shape of phase for $\psi(\omega)$ for band-passing nonlinearity.}
	\label{fig:phase_shaping}
\end{figure}
\section{Phase Shaping Method}
\label{sec:phase_shaping}
Theorem~\ref{thm:psi} and its following remarks, constitute the main idea of phase shaping method for band-passing nonlinearity in reset elements. Previous discussions revealed that the nonlinearity in reset element and its two immediate consequences, namely, phase advantage and higher-order harmonics are dependent on $\psi(\omega)$. The proposed architecture in this paper allows for shaping this phase difference, $\psi$, and consequently nonlinearity in a reset element.\\
From Fig.~\ref{fig:ssre_block}, we have
\begin{equation}
	\psi(\omega)=\angle \left(F(j\omega)K(j\omega)R(j\omega)\right)
\end{equation}
where $R(s)$ represents the base-linear element for ${\sum}_R$. Let
\begin{equation}
	K(s)=\frac{R^{-1}(s)}{s/\omega_{f}+1}=\frac{s/\omega_{r}+1}{s/\omega_{f}+1}
\end{equation}
which is the inverse of the $R(s)$ multiplied to a low-pass filter to make it proper. For a large enough $\omega_{f}$,
\begin{equation}
	\psi(\omega)=\angle F(j\omega).
\end{equation} 
Shaping $\psi(\omega)$ is now reduced to shaping the phase of $F(s)$. If one designs $F(s)$ to have a phase plot as depicted in Fig.~\ref{fig:phase_shaping}, the following will happen: 
\begin{itemize}
	\item Each zero crossing frequency belongs to $\omega_{lb}$, where reset element produces no higher-order harmonics at steady state. These frequencies will be seen as higher-order harmonic notches in HOSIDF.
	\item For frequencies out of $[\omega_l,\omega_h] $, one can upper bound nonlinearities by determining $\psi_{b}$. For a small enough $\psi_{b}$, higher-order harmonics can be approximated to zero. There will be no phase advantage for reset element at these frequencies.
	\item For frequencies in $[\omega_l,\omega_h] $, the reset element will produce high-order harmonics and will have phase advantage.
	\item $\psi_{f}$ and $\gamma$ determine the phase advantage of the reset element in $[\omega_l,\omega_h] $.
\end{itemize}
It can be concluded that by this design, the nonlinearity of the reset element is band-passed in $[\omega_l,\omega_h] $. The details on how to design $F(s)$ to have this phase behaviour will be discussed in the next section. 
\begin{figure}[t!]
	\centering
	\includegraphics[width=\columnwidth]{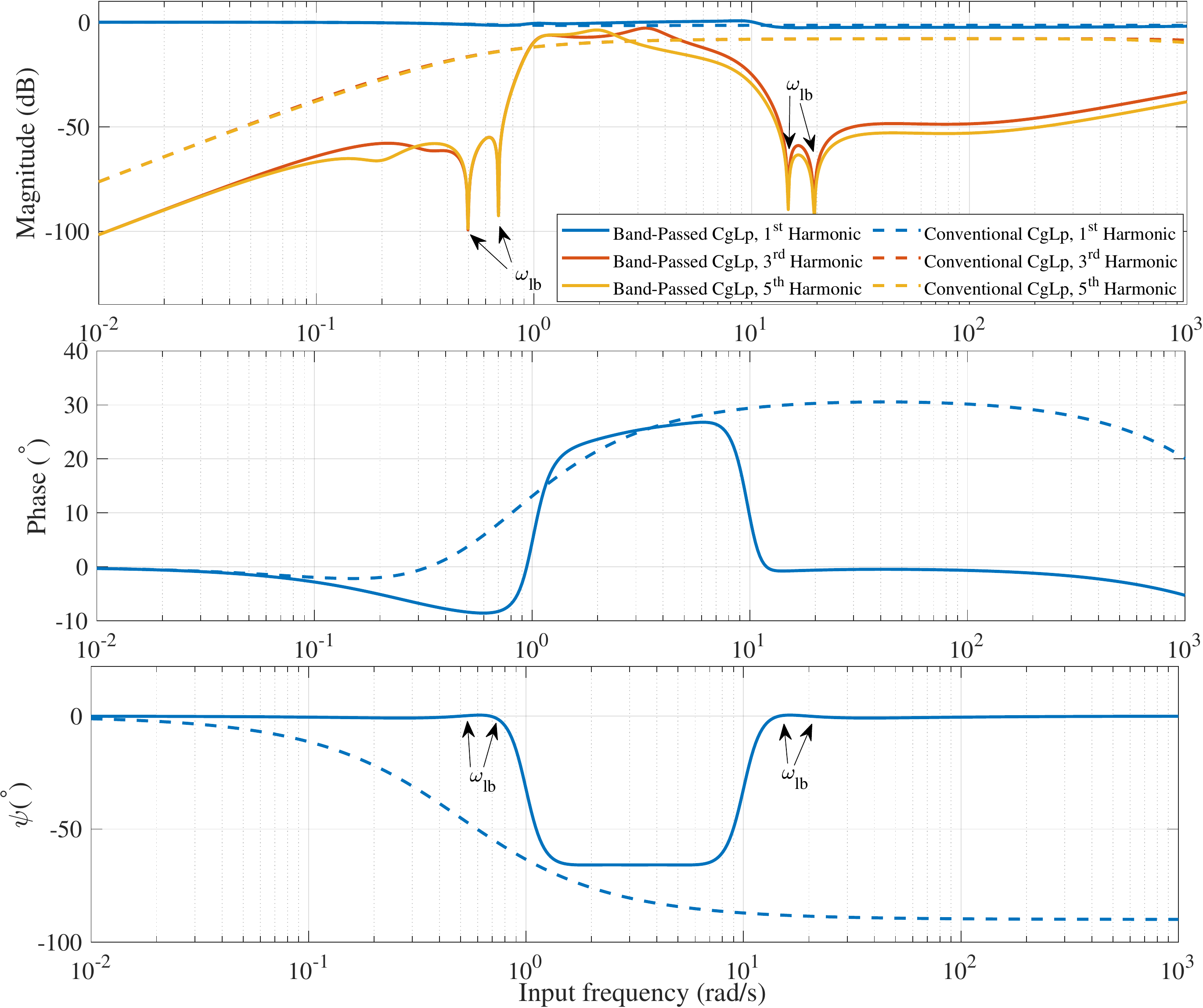}
	\caption{Comparison of HOSIDF of a band-passed CgLp with a conventional one, along with their $\psi$ plot. $\omega_r=0.5$ rad/s and $\gamma$ is 0.2 and 0.35 for band-passed CgLp and conventional one, respectively.}
	\label{fig:bp_cglp}
\end{figure}
\subsection{Band-Passed CgLp}

In order to create a band-passed CgLp, let
\begin{equation}
	T(s)=F^{-1}(s)
\end{equation}
then we have
\begin{equation}
\label{eq:H_n_total}
\begin{aligned}
	&{{H}_{n}}(\omega )=\left\{ \begin{aligned}
	&K(j\omega ){{G}_{\varphi n}}(\omega ),  &n=1\\
	&F(j\omega )K(j\omega ){{G}_{\varphi n}}(\omega ){{F}^{-1}}(jn\omega )  &n>1\\
	\end{aligned} \right.
	\\
	&\varphi(\omega)=\angle F(j\omega) + \tan^{-1}(\frac{\omega}{\omega_{r}})
\end{aligned}
\end{equation}
where $F(s)$ and $K(s)$ should be designed based on the guidelines of the phase shaping method. As aforementioned, it has to be noted that resetting action will cause a shift in corner frequency of $R(s)$~\cite{saikumar2019constant}. In order to account for this frequency shift an additional filter of
\begin{equation}
W(s)=\frac{(s/\omega_{r\alpha}+1)}{(s/\omega_{r}+1)}
\end{equation}  
can be used in $T(s)$.\\
\begin{equation}
	T(s)=F^{-1}(s).W(s)
\end{equation}
In order to verify the discussion, a CgLp element has been band-passed in $[1,10]$ rad/s. The HOSIDF analysis of the band-passed CgLp is compared with a conventional one in Fig.~\ref{fig:bp_cglp}. Both CgLps have $\omega_r=0.5$ rad/s, $\gamma$ is chosen to get approximately same phase advantage. As expected the shaped $\psi$ has made higher-order harmonics zero at frequencies in $\omega_{lb}$ and almost zero at other frequencies out of $[1,10]$ rad/s. The phase advantage is also limited to the band specified.\\ 
It is noteworthy, that by changing the $\psi$ shape in $[\omega_{l},\omega_{h}]$, one can change the shape of phase advantage. This can be useful in creating properties like iso-damping behaviour~\cite{dastjerdi2018tuning,chen2003relay,luo2011experimental}. 
\begin{figure}[t!]
	\centering
	\includegraphics[width=\columnwidth]{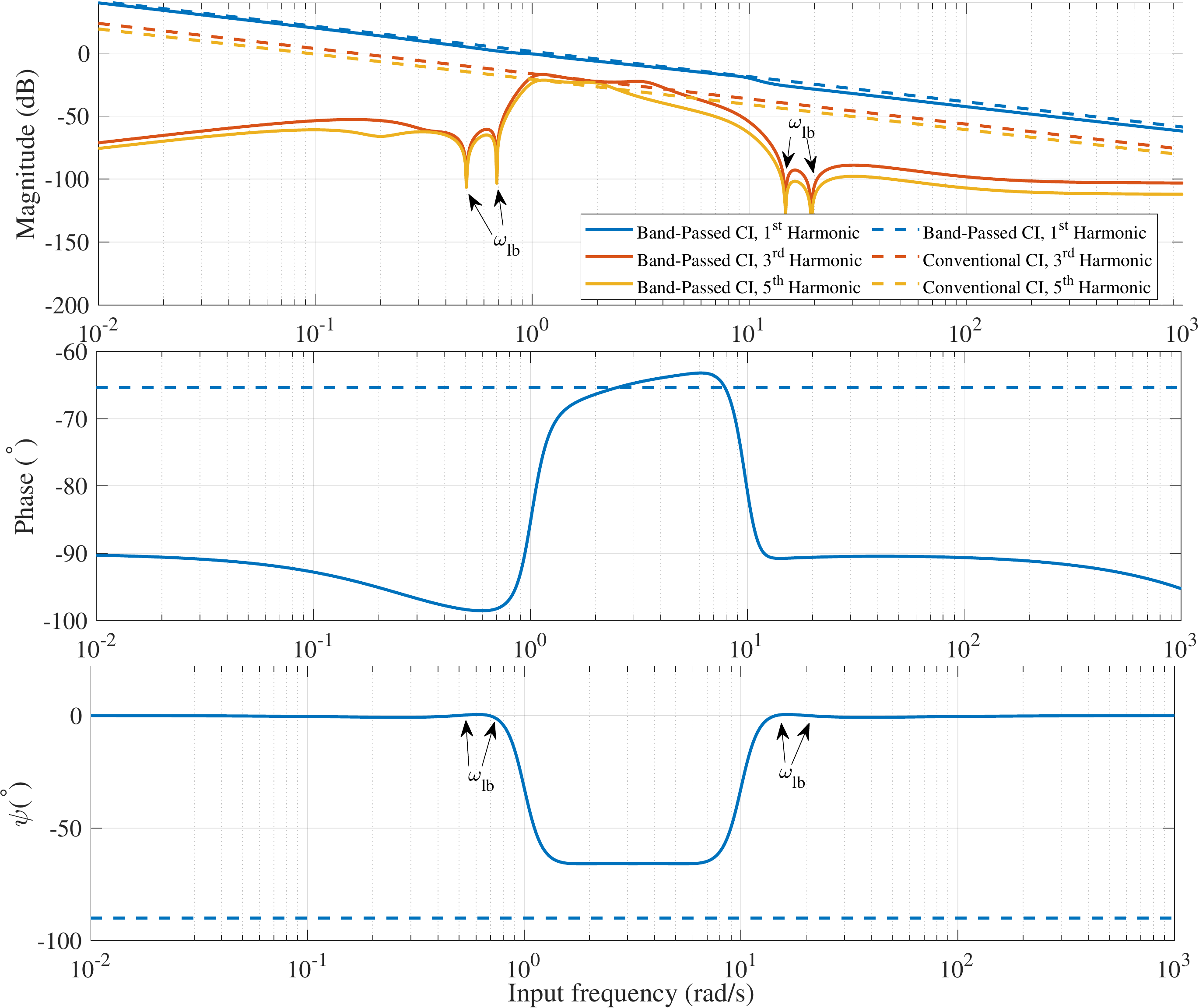}
	\caption{Comparison of HOSIDF of a band-passed Clegg integrator with a conventional one, along with their $\psi$ plot. $\gamma$ is 0.2 and 0.47 for band-passed Clegg Integrator and conventional one, respectively. For band-passed Clegg, $\omega_r=0.5$ rad/s.}
	\label{fig:bp_clegg}
\end{figure}  
\subsection{Band-passed Clegg Integrator and Band-Passed FORE}
Following the same design approach and by letting
\begin{equation}
T(s)=\frac{F^{-1}(s)}{s}
\end{equation}
one can create a band-passed Clegg integrator. Likewise, a band-passed FORE can be created by
\begin{equation}
T(s)=\frac{F^{-1}(s)}{s/\omega_{rr}+1}
\end{equation}
Figures~\ref{fig:bp_clegg} and~\ref{fig:bp_FORE} compares the HOSIDF of a band-passed Clegg integrator and a band-passed FORE with their conventional counterparts. Both reset elements are band-passed in $[1,10]$ rad/s. 
\begin{figure}[t!]
	\centering
	\includegraphics[width=\columnwidth]{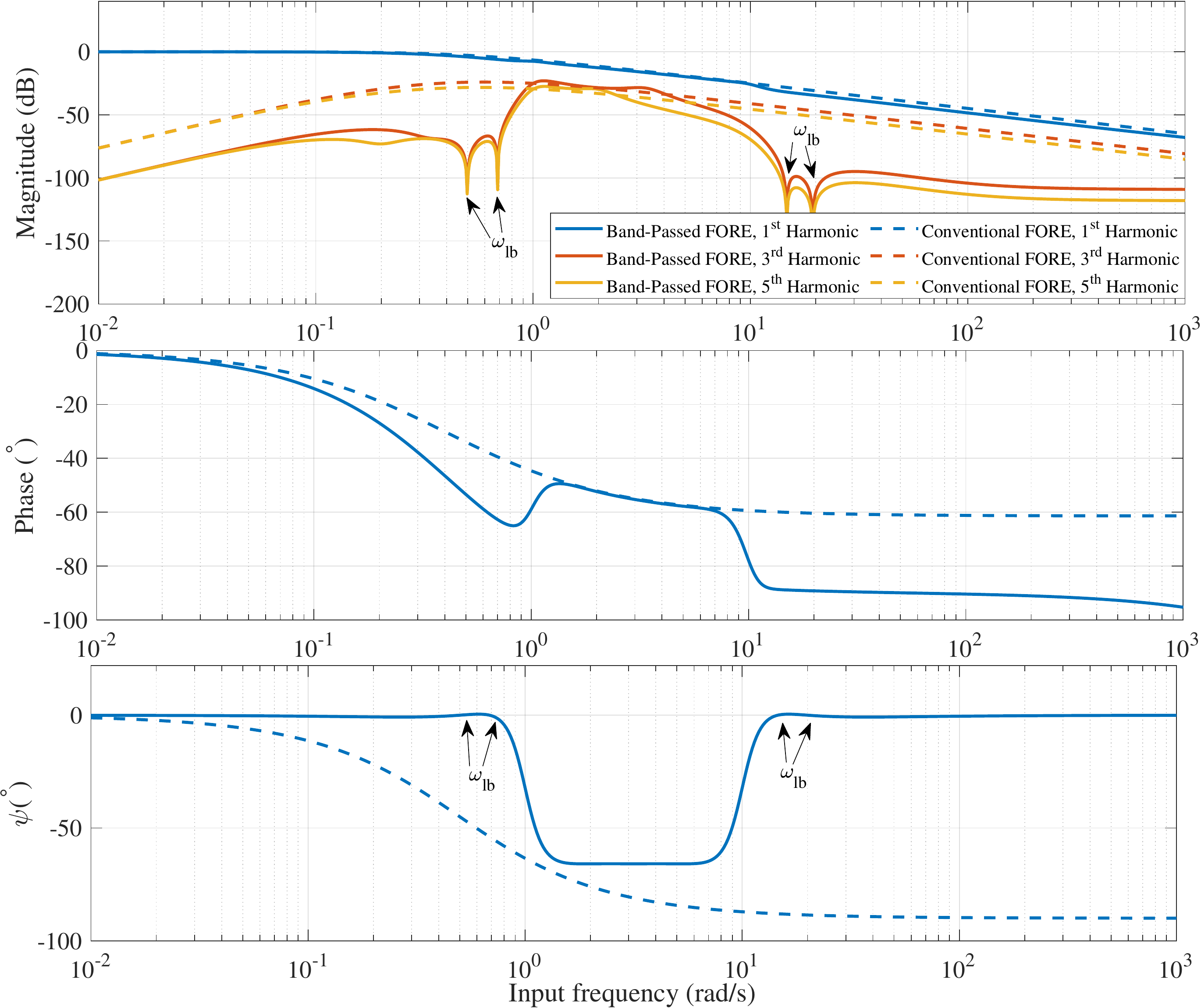}
	\caption{Comparison of HOSIDF of a band-passed CgLp with a conventional one, along with their $\psi$ plot. $\omega_r=0.5$ rad/s and $\gamma$ is 0.2 and 0.4 for band-passed CgLp and conventional one, respectively. For band-passed Clegg, $\omega_{rr}=0.5$ rad/s.}
	\label{fig:bp_FORE}
\end{figure}
As aforementioned, reset elements are usually known and used for their phase lag reduction compared to their linear counterparts~\cite{zaccarian2005first,horowitz1975non,hosseinnia2013fractional}. Phase lag reduction is mainly useful in cross-over frequency region and in other regions it doesn't have a clear benefit. Thus, due to ill-effect of higher-order harmonics especially for tracking and disturbance rejection, the proposed method is useful to band-pass the nonlinearity of these reset elements and its consequent benefits and ill-effects to the cross-over frequency region.\\
\section{Designing the Shaping Filter}
This section proposes a method to design the shaping filter, $F(s)$, such that its phase mimics the schematic shape of Fig.~\ref{fig:phase_shaping}. The first parameter to consider is $\psi_{f}$, which affects the phase of reset element. Assuming the nonlinearity of reset element is being band-passed in cross-over frequency region, if $\omega_r$ is less than $\omega_c/10$, where $\omega_{c}$ is the cross-over frequency, one can choose a combination of $\gamma$ and $\psi_{f}$ according to Eq.~(\ref{eq:phase_psi}) to achieve the desired first-harmonic phase of the reset element at $\omega_c$. \\
The suggested architecture for the shaping filter consists of a lag filter, a notch and an anti-notch filter. The notch and the anti-notch filters should be placed at $\omega_{h}$ and $\omega_{l}$, respectively. Since it is required for phase of the shaping filter to reach a certain value in the passing band, this method requires a flat phase behaviour which is not an integer multiple of $90^\circ$. This is achievable using fractional lag filters; thus, the poles and zeros of the lag filter can be placed according to guidelines of the CRONE approximation of a fractional-order element~\cite{Oustaloup1991}. Such a placement will simplify the calculations.\\
The CRONE approximation is 
\begin{align}
&{{\left( \frac{s/{{\omega }_{l}}+1}{s/{{\omega }_{h}}+1} \right)}^{\lambda }} \approx C \prod_{m=1}^{N} \frac{1+\frac{s}{\omega_{z,m}}}{1+\frac{s}{\omega_{p,m}}}
\label{eq:04:crone} \\
&\omega_{z,m} = \omega_l \left( \frac{\omega_h}{\omega_l} \right)^{\frac{2m-1-\lambda}{2N}}
\label{eq:04:crone:zeros} \\
&\omega_{p,m} = \omega_l \left( \frac{\omega_h}{\omega_l} \right)^{\frac{2m-1+\lambda}{2N}}
\label{eq:04:crone:poles}
\end{align}
where $\lambda \in {{\Re }^{-}}$ and  $N$ is number of poles and zeros. CRONE makes sure that the poles and zeros are placed in equal distance in logarithmic scale. $C$ is the tuning parameter for adjusting the gain of the approximation. However, in this paper, only the phase behaviour of this filter is of interest since the first-order gain behaviour of this element will be cancelled out according to Eq.~(\ref{eq:H_n_total}).\\
The proposed design of shaping filter is
\begin{figure}[t!]
	\centering
	\includegraphics[width=\columnwidth]{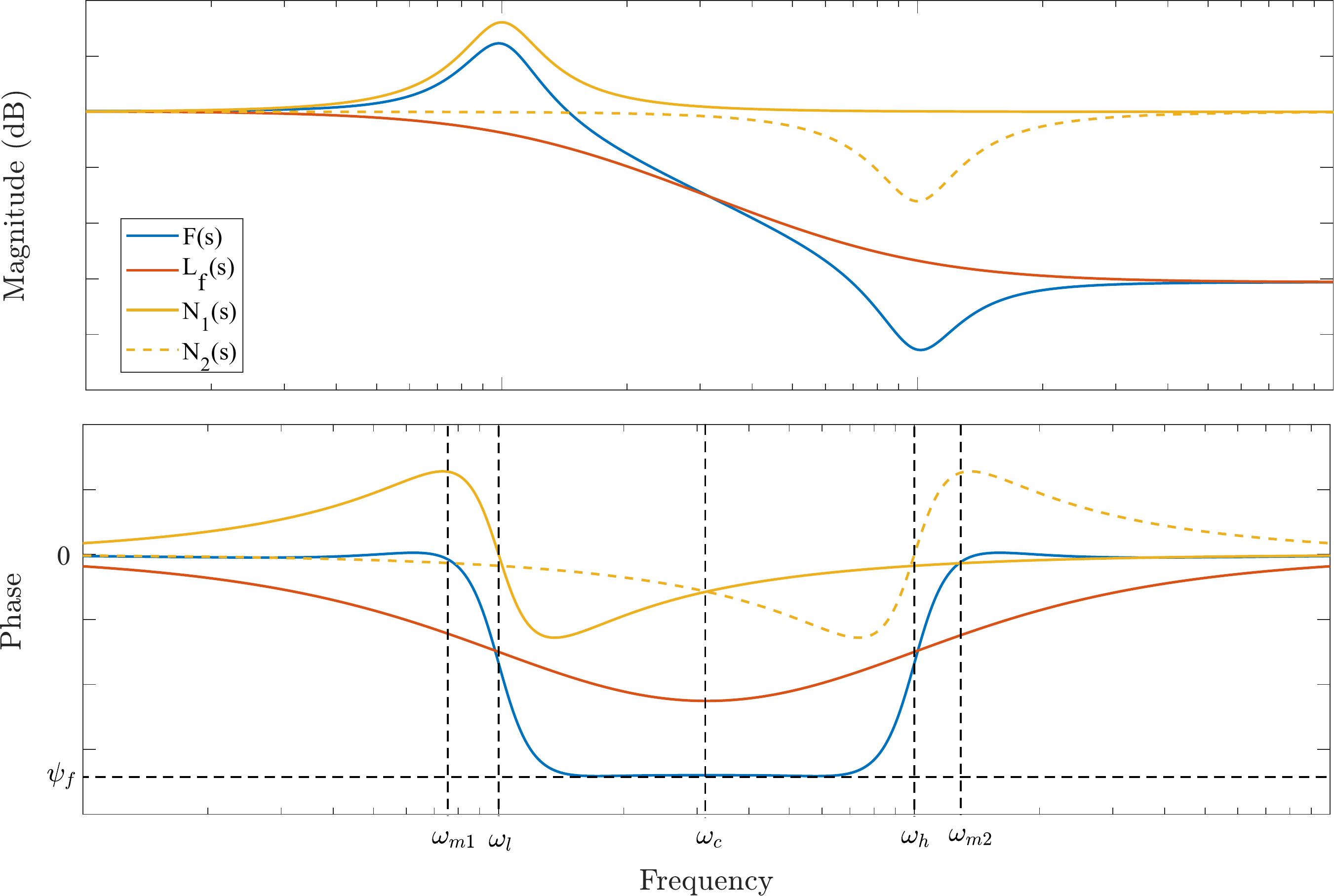}
	\caption{Bode diagram of the composition of $F(s)$.}
	\label{fig:f_s}
\end{figure}  
\begin{align}
& F(s)={{N}_{1}}(s)L_f(s){{N}_{2}}(s) \\ 
&L_f(s)={{\left( \frac{s/{{\omega }_{l}}+1}{s/{{\omega }_{h}}+1} \right)}^{\lambda }}\\
& {{N}_{1}}(s)= \frac{{{\left( s/{{\omega }_{l}} \right)}^{2}}+s/{{\omega }_{l}}+1}{{{\left( s/{{\omega }_{l}} \right)}^{2}}+s/(q{{\omega }_{l}})+1}  \\ 
& {{N}_{2}}(s)= \frac{{{\left( s/{{\omega }_{h}} \right)}^{2}}+s/(q{{\omega }_{h}})+1}{{{\left( s/{{\omega }_{h}} \right)}^{2}}+s/{{\omega }_{h}}+1}  
\end{align}
Thus, there are two parameters to tune, namely, $\lambda$ and $q$. According to Fig.~\ref{fig:f_s} and criteria mentioned in Section~\ref{sec:phase_shaping} for the shaping filter, two constraints can be introduced to find the proper value for  $\lambda$ and $q$. 
\begin{align}
& \angle F(j{{\omega }_{c}})={{\psi }_{f}} \\ 
& \angle F(j{{\omega }_{m1}})=\angle F(j{{\omega }_{m2}})=\varepsilon_1 
\label{eq:constraint_2} 
\end{align}
where $\varepsilon_1$ is a small positive value and
\begin{align}
& \omega_{m1} \in (0,{{\omega }_{l}}) \mid  \frac{d}{d\omega }\angle {{N}_{1}}(j{{\omega }_{m1}})=0\\ 
& \omega_{m2} \in ({{\omega }_{h}},+\infty) \mid \frac{d}{d\omega }\angle {{N}_{2}}(j{{\omega }_{m2}})=0  
\end{align}
Equation (\ref{eq:constraint_2}) ensures that phase of shaping filter remains close to zero and crosses the zero line two time before $\omega_l$ and two times after $\omega_h$.
By symmetry, constraints will be simplified to 
\begin{align}
& \angle {{L}_{f}}(j{{\omega }_{c}})+2\angle {{N}_{1}}(j{{\omega }_{c}})={{\psi }_{f}} \\ 
& \angle {{L}_{f}}(j{{\omega }_{m1}})+\angle {{N}_{1}}(j{{\omega }_{m1}})=\varepsilon_2.
\end{align}
As a rule of thumb, one can choose $\varepsilon_2 =\pi/180$ rad/s.\\
In this paper, without loss of generality, it is assumed that the band-passing range is one decade, i.e., $\omega_{h}=10\omega_{l}$ and the equations are derived in the following.\\ 
Assuming $\omega_{h}=10\omega_{l}$, we have $\omega_{c}=\sqrt{10}\omega_{l}$, thus
\begin{align}
& {{L}_{f}}(j{{\omega }_{c}})\approx \lambda \left( {{\tan }^{-1}}\left( \sqrt{10} \right)-{{\tan }^{-1}}\left( \frac{1}{\sqrt{10}} \right) \right) \\ 
& {{N}_{1}}(j{{\omega }_{c}})={{\tan }^{-1}}\left( \frac{\sqrt{10}}{9q} \right)-{{\tan }^{-1}}\left( \frac{\sqrt{10}}{9} \right) \\ 
& {{L}_{f}}(j{{\omega }_{m1}})=\lambda \left( {{\tan }^{-1}}\left( \zeta  \right)-{{\tan }^{-1}}\left( \zeta /10 \right) \right) \\ 
& {{N}_{1}}(j{{\omega }_{m1}})={{\tan }^{-1}}\left( \frac{\zeta }{1-{{\zeta }^{2}}} \right)-{{\tan }^{-1}}\left( \frac{\zeta /q}{1-{{\zeta }^{2}}} \right)  
\end{align}
where
\begin{equation}
	\zeta =\frac{{{\omega }_{m1}}}{{{\omega }_{l}}}=\frac{\sqrt{2}}{2}\sqrt{\frac{2q+1-\sqrt{1+4q}}{q}}.
\end{equation}
\section{An Illustrative Example}
In order to illustrate the application of the proposed architecture and method in precision motion control, three controllers have been designed and their performance have been compared. The three controllers are a band-passed CgLp, a conventional CgLp designed based on guidelines of~\cite{saikumar2019constant} and a PID. 
\begin{figure}[t!]
	\centering
	\includegraphics[width=\columnwidth]{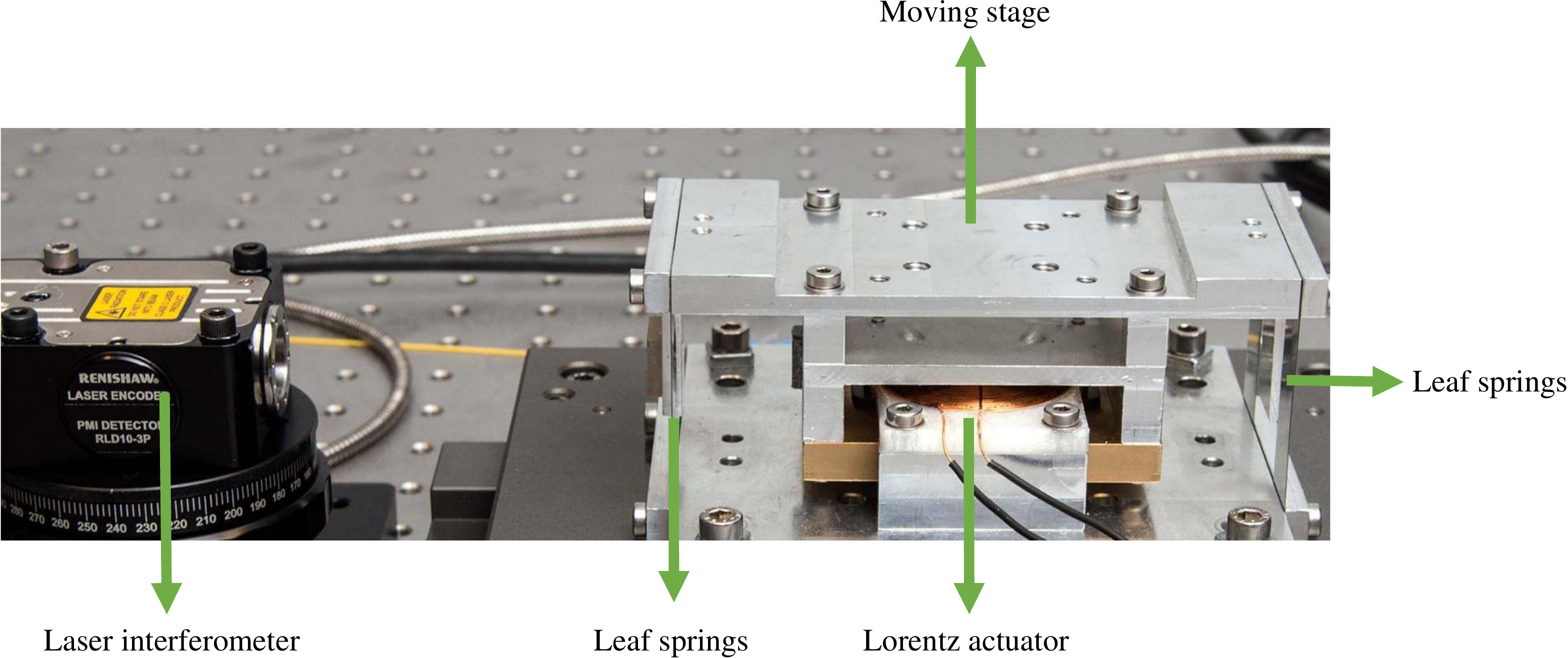}
	\caption{The custom-designed precision stage used for comparison of controllers performance.}
	\label{fig:setup}
\end{figure} 
\subsection{Plant}
The plant which is used for practical implementation is a custom-designed precision stage that is actuated with the use of a Lorentz actuator. This stage is linear-guided using two flexures to attach the Lorentz actuator to the base of the stage and actuated at the centre of the flexures. With a laser encoder, the position of the precision stage is read out with 10 nm resolution. A picture of the setup can be found in Fig.~\ref{fig:setup}. The identified transfer function for the plant is:
\begin{equation}
G(s)=\frac{\num{3.038e4}}{{{s}^{2}}+0.7413s+243.3}
\label{eq:plant}
\end{equation}
Figure~\ref{fig:frf} shows the measured frequency response and that of the identified model.
\begin{figure}[t!]
	\centering
	\includegraphics[width=\columnwidth]{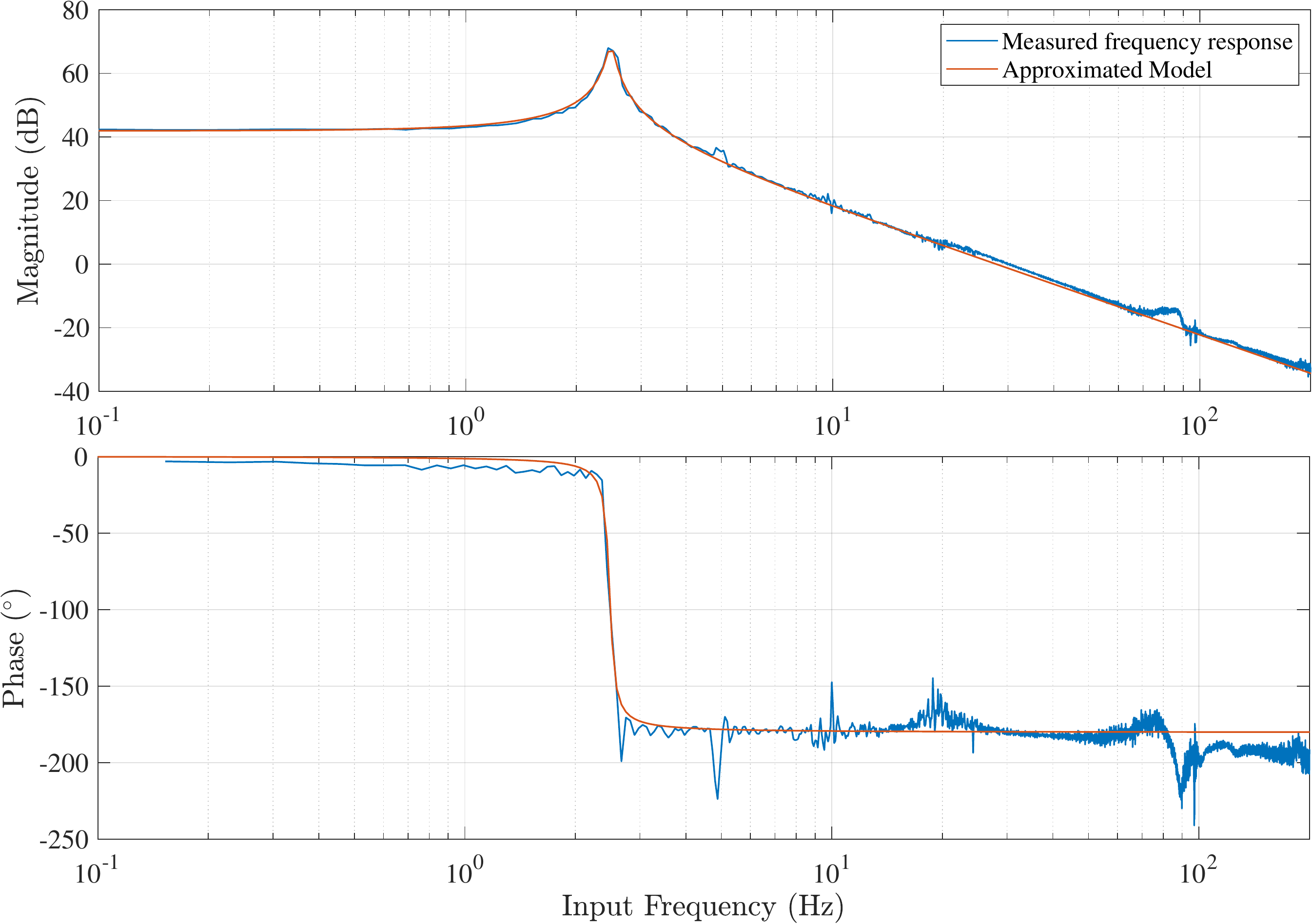}
	\caption{Measured frequency response of the plant and corresponding identified model.}
	\label{fig:frf}
\end{figure} 
\subsection{Controller design approach}
Controllers are designed for a bandwidth of $\omega_c=100$ Hz and phase margin of $42^\circ$. The block diagram of the closed-loop system for CgLps is presented in Fig.~\ref{fig:closed-loop}. The tamed derivative is designed such that the linear part of the controller provides $10^\circ$ of PM for the system and CgLps are designed to provide remaining $32^\circ$. The main reason for existence of the tamed derivative for CgLp controllers is stabilizing the base linear system, which is one of the necessary conditions for stability using $H_\beta$ theorem. For the case of PID controller the whole required PM is provided through tamed derivative.\\
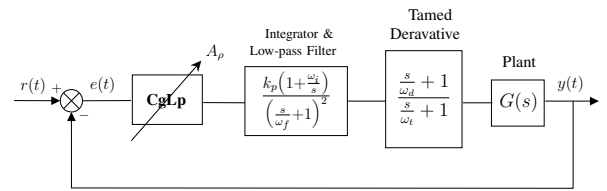
\begin{figure}[t!]
	\centering
	\scalebox{0.9}[0.9]{
		\resizebox{\columnwidth}{!}{
			
			\tikzset{every picture/.style={line width=0.75pt}} 
			
			\begin{tikzpicture}[x=0.75pt,y=0.75pt,yscale=-1,xscale=1]
			
			\draw  [line width=0.75]  (365.21,47.25) -- (440.5,47.25) -- (440.5,143) -- (365.21,143) -- cycle ;
			
			\draw  [line width=0.75]  (115.5,71.17) -- (184.5,71.17) -- (184.5,120) -- (115.5,120) -- cycle ;
			\draw [line width=0.75]    (114.5,136.08) -- (181.53,59.34) ;
			\draw [shift={(183.5,57.08)}, rotate = 491.13] [fill={rgb, 255:red, 0; green, 0; blue, 0 }  ][line width=0.08]  [draw opacity=0] (10.72,-5.15) -- (0,0) -- (10.72,5.15) -- (7.12,0) -- cycle    ;
			
			\draw  [line width=0.75]  (226.5,63) -- (327.5,63) -- (327.5,131) -- (226.5,131) -- cycle ;
			
			\draw [line width=0.75]    (66.42,96.67) -- (114.22,96.45) ;
			\draw [line width=0.75]    (550,97) -- (550,183) -- (55.5,182.67) -- (55.46,110.73) ;
			\draw [shift={(55.46,107.73)}, rotate = 449.97] [fill={rgb, 255:red, 0; green, 0; blue, 0 }  ][line width=0.08]  [draw opacity=0] (8.93,-4.29) -- (0,0) -- (8.93,4.29) -- cycle    ;
			\draw [line width=0.75]    (-1,96.45) -- (41.5,96.65) ;
			\draw [shift={(44.5,96.67)}, rotate = 180.27] [fill={rgb, 255:red, 0; green, 0; blue, 0 }  ][line width=0.08]  [draw opacity=0] (8.93,-4.29) -- (0,0) -- (8.93,4.29) -- cycle    ;
			\draw [line width=0.75]    (522.5,97) -- (572.5,97) ;
			\draw [shift={(575.5,97)}, rotate = 180] [fill={rgb, 255:red, 0; green, 0; blue, 0 }  ][line width=0.08]  [draw opacity=0] (8.93,-4.29) -- (0,0) -- (8.93,4.29) -- cycle    ;
			\draw [line width=0.75]    (185.42,98.45) -- (226.5,98.67) ;
			\draw [line width=0.75]    (327.5,97) -- (365.5,97) ;
			\draw  [line width=0.75]  (469.5,76.17) -- (521.5,76.17) -- (521.5,121.67) -- (469.5,121.67) -- cycle ;
			\draw [line width=0.75]    (441.5,98.67) -- (470.5,98.67) ;
			\draw   (44.5,96.67) .. controls (44.5,90.56) and (49.41,85.61) .. (55.46,85.61) .. controls (61.52,85.61) and (66.42,90.56) .. (66.42,96.67) .. controls (66.42,102.77) and (61.52,107.73) .. (55.46,107.73) .. controls (49.41,107.73) and (44.5,102.77) .. (44.5,96.67) -- cycle ; \draw   (47.71,88.85) -- (63.21,104.49) ; \draw   (63.21,88.85) -- (47.71,104.49) ;
			
			\draw (17.84,81.76) node  [font=\large]  {$r( t)$};
			\draw (547.62,79.76) node  [font=\large]  {$y( t)$};
			\draw (86.35,80.76) node  [font=\large]  {$e( t)$};
			\draw (496.57,97.58) node  [font=\Large]  {$G( s)$};
			\draw (497,58) node  [font=\normalsize] [align=left] {{\fontfamily{ptm}\selectfont {\large Plant}}};
			\draw (41,85) node    {$+$};
			\draw (68,111) node    {$-$};
			\draw (402.85,95.13) node  [font=\Large]  {$\dfrac{\frac{s}{\omega _{d}} +1}{\frac{s}{\omega _{t}} +1}$};
			\draw (404,21) node  [font=\large] [align=left] {{\fontfamily{ptm}\selectfont \textbf{ \ \ }Tamed \ }\\{\fontfamily{ptm}\selectfont Deravative}};
			\draw (197,48.91) node  [font=\large]  {$A_{\rho }$};
			\draw (149,96.58) node   [align=left] [font=\large] {\textbf{CgLp}};
			\draw (277,97) node  [font=\LARGE]  {$\frac{k_{p}\left( 1+\frac{\omega _{i}}{s}\right)}{\left(\frac{s}{\omega_f}+1\right)^2}$};
			\draw (275,40) node  [font=\normalsize] [align=left]{{\fontfamily{ptm}\selectfont \textbf{ \ \ }Integrator \& \ }\\{\fontfamily{ptm}\selectfont Low-pass Filter}};

			\end{tikzpicture}
			
		}
	}
	\caption{Designed control architecture to compare the performance of two sets of controllers.} 
	\label{fig:closed-loop}
\end{figure} 
Table~\ref{tab:params} shows the parameters for the designed controllers. Figure~\ref{fig:ol_hosidf} shows the open loop HOSIDF analysis for them including the plant. As expected CgLp controllers show a higher first-order harmonic gain than PID in lower frequencies while they have the same phase margin as PID. However, due to the design method presented in this paper, the band-passed CgLp shows significant decrease in higher-order harmonics than conventional one. Consequently, one can expect an improvement in precision in results of band-passed CgLp. \\
\begin{table}[t!]
	\centering
	\caption{Parameters of the designed controllers. Frequencies are in Hz.}
	\label{tab:params}
	\resizebox{\columnwidth}{!}{%
		\begin{tabular}{@{}llllllllll@{}}
			\toprule
			Controller       & $\omega_i$ & $\omega_d$ & $\omega_t$ & $\omega_f$ & $\omega_{r}$ & $\gamma$ & $\psi$         & $\lambda$ & $q$  \\ \midrule
			Band-passed CgLp & 10         & 60.6       & 165        & 1000       & 5                  & -0.05        & $-57.34^\circ$ & -0.69     & 2.21 \\
			FORE CgLp        & 10         & 60.6       & 165        & 1000       & 5                  & 0.25     & N/A            & N/A       & N/A  \\
			PID              & 10         & 27.0       & 370        & 1000       & N/A                & N/A      & N/A            & N/A       & N/A  \\ \bottomrule
		\end{tabular}%
	}
\end{table}
\begin{figure}[t!]
	\centering
	\includegraphics[width=\columnwidth]{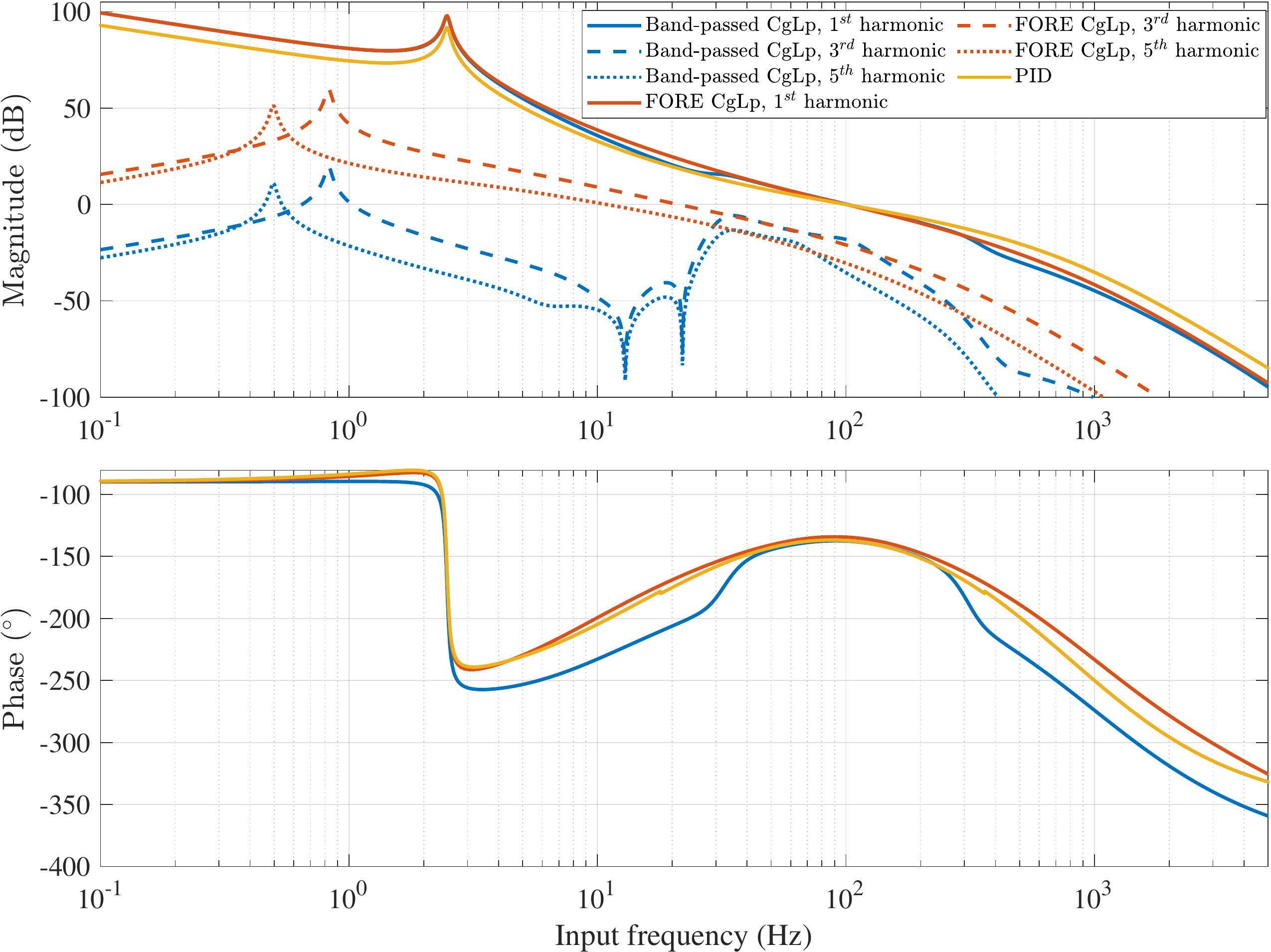}
	\caption{Open loop HOSIDF analysis of the designed controllers including the plant. At lower frequencies, first harmonic gain of band-passed CgLp and FORE CgLp are on top of each other.}
	\label{fig:ol_hosidf}
\end{figure}
\subsection{Practical Implementation}
In order to validate the theories, architectures and methods discussed, the designed controllers have been implemented in practice and their performance has been compared. The implementation was done using National Instruments CompactRIO with a sampling frequency of $10$ kHz.\\
Sinusoidal tracking of different frequencies and different amplitudes have been tested in practice for three designed controllers. Figure~\ref{fig:5hz} shows the error and control input of three controllers to track a 5 Hz sinusoidal input with amplitude of $\num{2e-4}$ m.\\
As it is shown in the Fig.~\ref{fig:5hz}, band-passed CgLp has better steady-state precision than PID as it could be predicted referring to Fig.~\ref{fig:ol_hosidf}. However, conventional CgLp due to presence of higher-order harmonics cannot live up to expectation of first-order DF. The Root Mean Square (RMS) of error for controllers are $\num{3.46e-7}$ m, $\num{1.32e-6}$ m and $\num{4.65e-7}$ m for the band-passed CgLp, the conventional one, and PID, respectively. The figures show a reduction of 25.7\% and 74\% in RMS of steady state error for the band-passed CgLp with respect to PID and the conventional CgLp.\\
Figure~\ref{fig:5hz} also reveals another interesting characteristic of the band-passed CgLp. Reset controllers are known for having large peaks in control input which can saturate the actuator. However, the band-passed CgLp due to limited nonlinearity shows much smaller control input with respect to the conventional CgLp. The maximum of control input for controllers are 0.297 V, 6.980 V, and 0.448 V for the band-passed CgLp, the conventional one, and PID, respectively.\\
\begin{figure}[t!]
	\centering
	\includegraphics[width=\columnwidth]{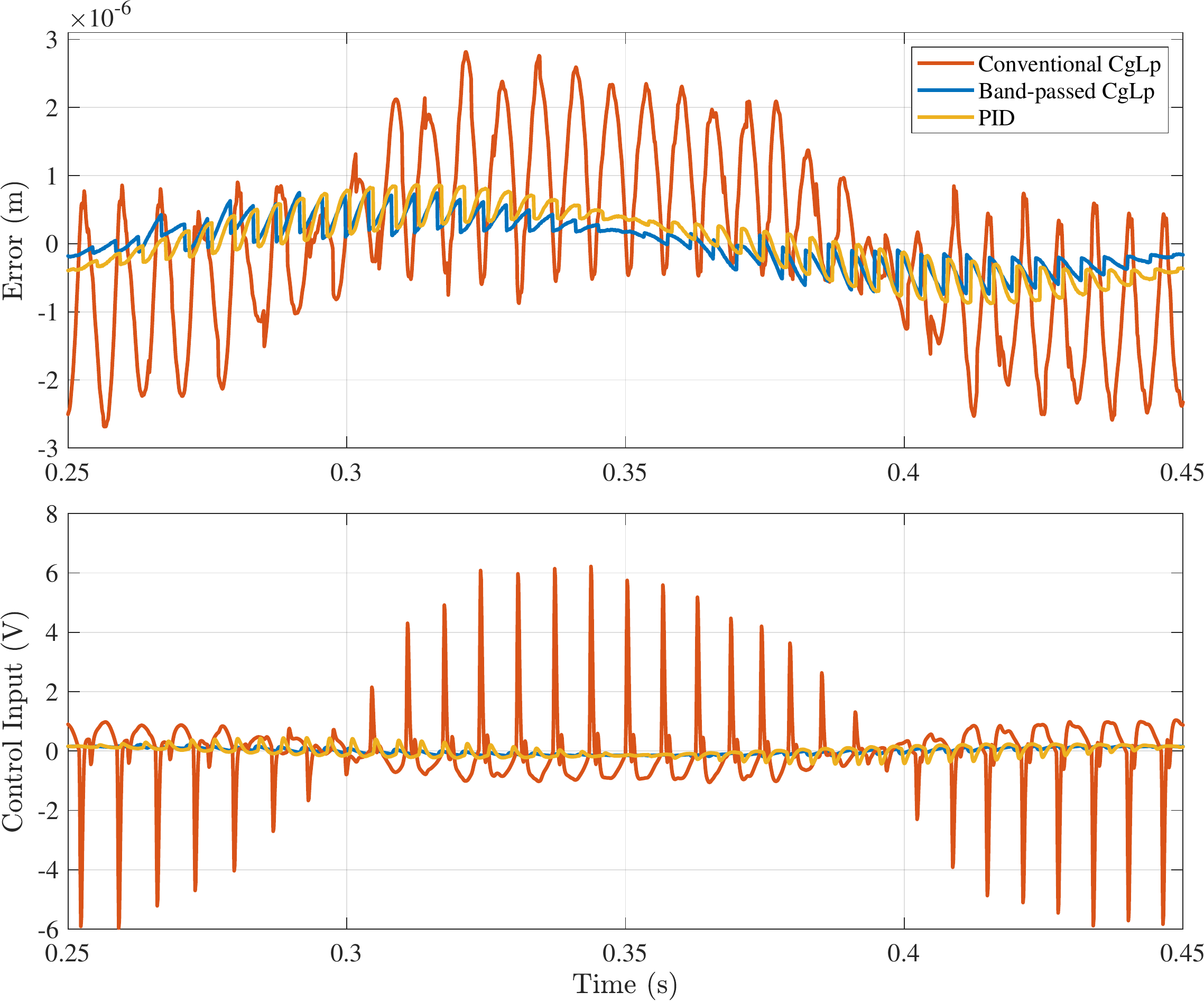}
	\caption{Error and control input of designed controllers, measured in practice for a sinusoidal input of 5 Hz and amplitude of 0.2 mm.}
	\label{fig:5hz}
\end{figure}
The main contribution of this paper and designing band-passed CgLp is to limit nonlinearity to a range of frequencies and reduce it in other frequencies. Thus, it is expected that the band-passed CgLp has lower higher-order harmonics than the conventional one in range of $[0.1~~30]$ Hz. This was also verified in practical implementation as it is shown in single-sided spectrum of Fast Fourier Transform (FFT) of steady-state error for a sinusoidal input of 10~Hz, presented in Fig.~\ref{fig:10HZ}. From the figure it can be observed that the $3^{\text{rd}}$ and the $5^{\text{th}}$ harmonic which are at 30 and 50 Hz, are significantly lower for band-passed CgLp. The same holds for other sinusoidal inputs with frequencies in  $[0.1~~30]$ Hz, however they are not presented for the sake of brevity.\\
\begin{figure}[t!]
	\centering
	\includegraphics[width=\columnwidth]{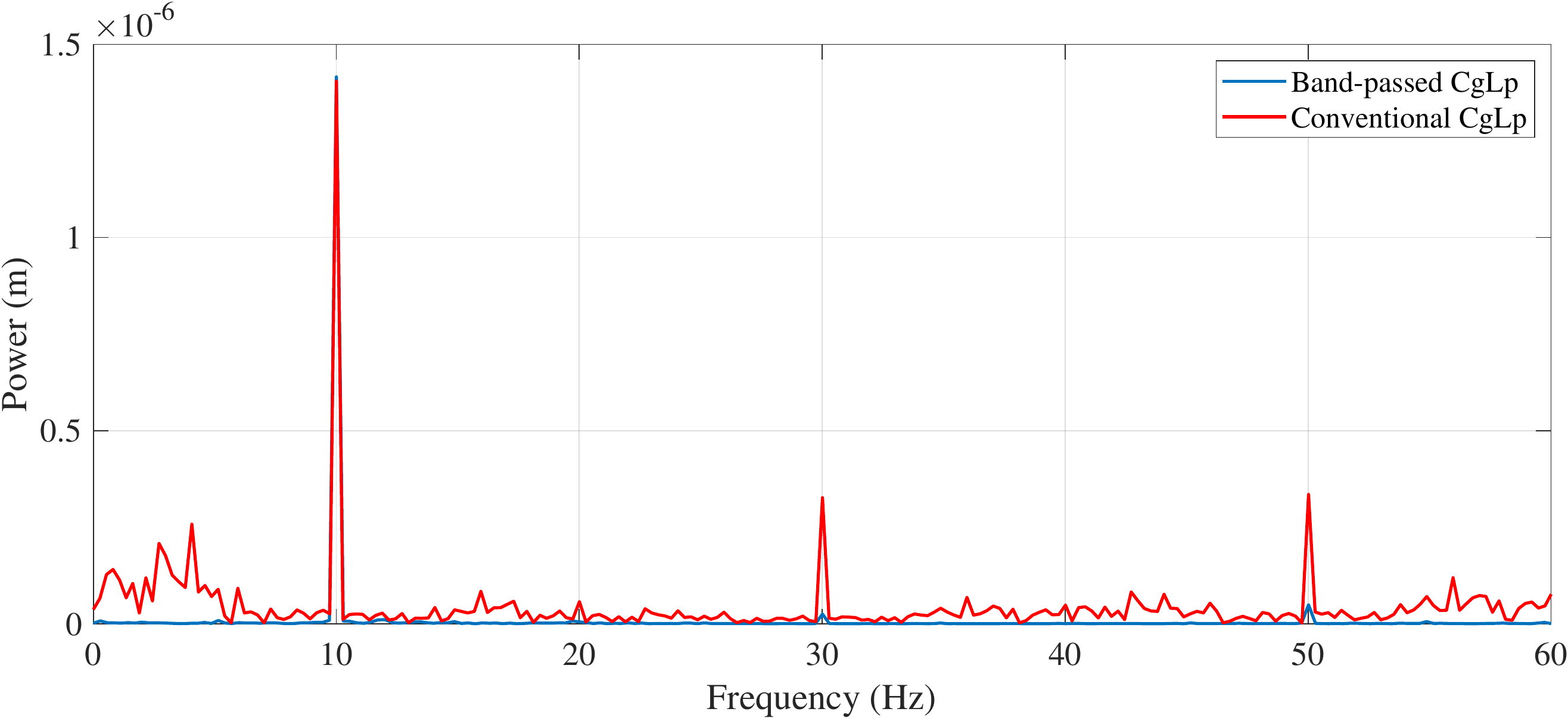}
	\caption{The single-sided FFT spectrum of steady-state error for a sinusoidal input of 10 Hz. The amplitude of reference is \num{7.143e-5} m. }
	\label{fig:10HZ}
\end{figure} 
Furthermore, in order to have a broader view of tracking performance of designed controllers, $L_2$ and $L_\infty$ norm of their steady-state error for sinusoidal inputs of amplitude \num{7.143e-5}~m and frequencies of 1 to 24~Hz in steps of 1 Hz is depicted in Fig.~\ref{fig:tracking_error}. The figure clearly shows a significant decrease of steady-state error for band-passed CgLp with respect to conventional one, indicating the adverse effect of higher-order harmonics in lower frequencies for tracking precision. Due to large peaks present in control input of the conventional CgLp, tracking sinusoidal waves of frequencies larger than 10 Hz was not possible due to actuator saturation. Moreover, $L_\infty$ norm (RMS) of error for band-passed CgLp is lower than PID in almost the entire frequency range till 17~Hz. By resorting to Fig.~\ref{fig:ol_hosidf}, one can notice that from 17~Hz, the higher-order harmonics will increase for band-passed CgLp and decrease again at 22~Hz. The same trend holds for Fig.~\ref{fig:tracking_error}. At very low frequencies, i.e., 1 to 3 Hz, higher-order harmonics of band-passed CgLp is relatively high and thus the steady-state error. A possible suggestion to improve performance at these frequencies is designing shaping filter such that a frequency within this range, e.g., 2 Hz is included in $\omega_{lb}$. This will reduce the higher-order harmonics in this range. \\ 
Remark~\ref{rem:1} suggests that for every frequency in $\omega_{lb}$, higher-order harmonics will be zero. However, in practice due to practical challenges like discretization, quantization and delay, it is expected that this claim does not hold completely. In other words, one can expect a decrease in higher-order harmonics to drop for frequencies in $\omega_{lb}$. Table~\ref{tab:harmonics_22} presents the $1^{\text{st}}$, $3^{\text{rd}}$, $5^{\text{th}}$ and $7^{\text{th}}$ harmonics of steady-state error for sinusoidal inputs of 21, 22, and 23 Hz, where 22 Hz is in  $\omega_{lb}$. The harmonics are obtained using FFT method. The significant drop in higher-order harmonics is observable for 22 Hz. \\
\begin{table}[t!]
	\centering
	\caption{Harmonics of steady-state error for sinusoidal inputs of 21, 22 and 23 Hz. Columns 2 through 5 shows harmonics in m and in columns 5 till 8, harmonics are normalised by the amplitude of input and presented in dB.}
	\label{tab:harmonics_22}
	\resizebox{\columnwidth}{!}{%
		\begin{tabular}{@{}cllllllll@{}}
			\toprule
			Freq. (Hz)& $1^{\text{st}}$ (m)  & $3^{\text{rd}}$ (m)  & $5^{\text{th}}$ (m)  & $7^{\text{th}}$ (m)  & $1^{\text{st}}$ (dB) & $3^{\text{rd}}$ (dB) & $5^{\text{th}}$ (dB) & $7^{\text{th}}$ (dB) \\ \midrule
			21  & \num{5.45e-6} & \num{4.54e-7} & \num{1.97e-7} & \num{7.23e-8} & -22.33        & -43.93        & -51.15        & -59.89        \\
			22  & \num{6.54e-6} & \num{1.92e-7} & \num{1.11e-7} & \num{3.80e-8} & -20.75        & -51.40        & -56.10        & -65.48        \\
			23  & \num{7.50e-6} & \num{1.64e-6} & \num{5.33e-7} & \num{1.22e-7} & -19.56        & -32.77        & -42.52        & -55.31        \\ \bottomrule
		\end{tabular}%
	}
\end{table}
\begin{figure}[t!]
	\centering
	\includegraphics[width=\columnwidth]{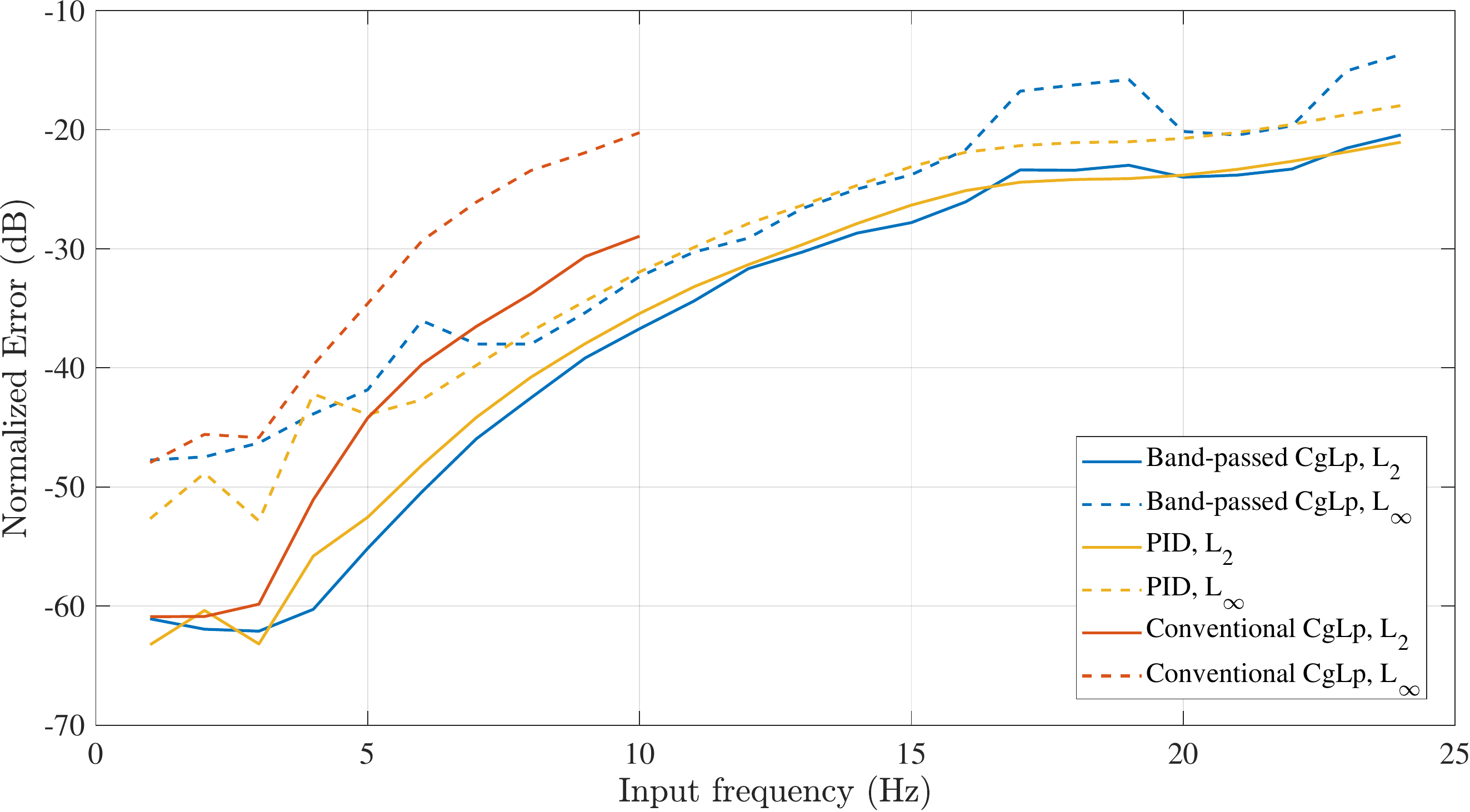}
	\caption{$L_2$ and $L_\infty$ norm of normalized steady-state error for sinusoidal inputs with amplitude of \num{7.143e-5} m and frequencies of 1 to 24 Hz with step of 1 Hz. The error is normalized with respect to amplitude of input.    }
	\label{fig:tracking_error}
\end{figure}
At last, in order to evaluate the performance of the proposed band-passed CgLp controller for multi-sinusoidal tracking, an input constituted of 3 sinusoidal wave was used. The reference which was used is 
\begin{align}\nonumber
	&r(t)=\\&\num{1.5e-5}\sin(2\pi 13 t)+ \num{2.5e-5}\sin(2\pi 7 t)+ \num{5.0e-5}\sin(5\pi 5 t).
	\label{eq:r_3_sin}
\end{align}
Figure~\ref{fig:multi_sin} shows the error and control input for three designed controllers. The band-passed CgLp still shows less steady-state error with respect to other controllers and no large peak in control input.\\
\begin{figure}[t!]
	\centering
	\includegraphics[width=\columnwidth]{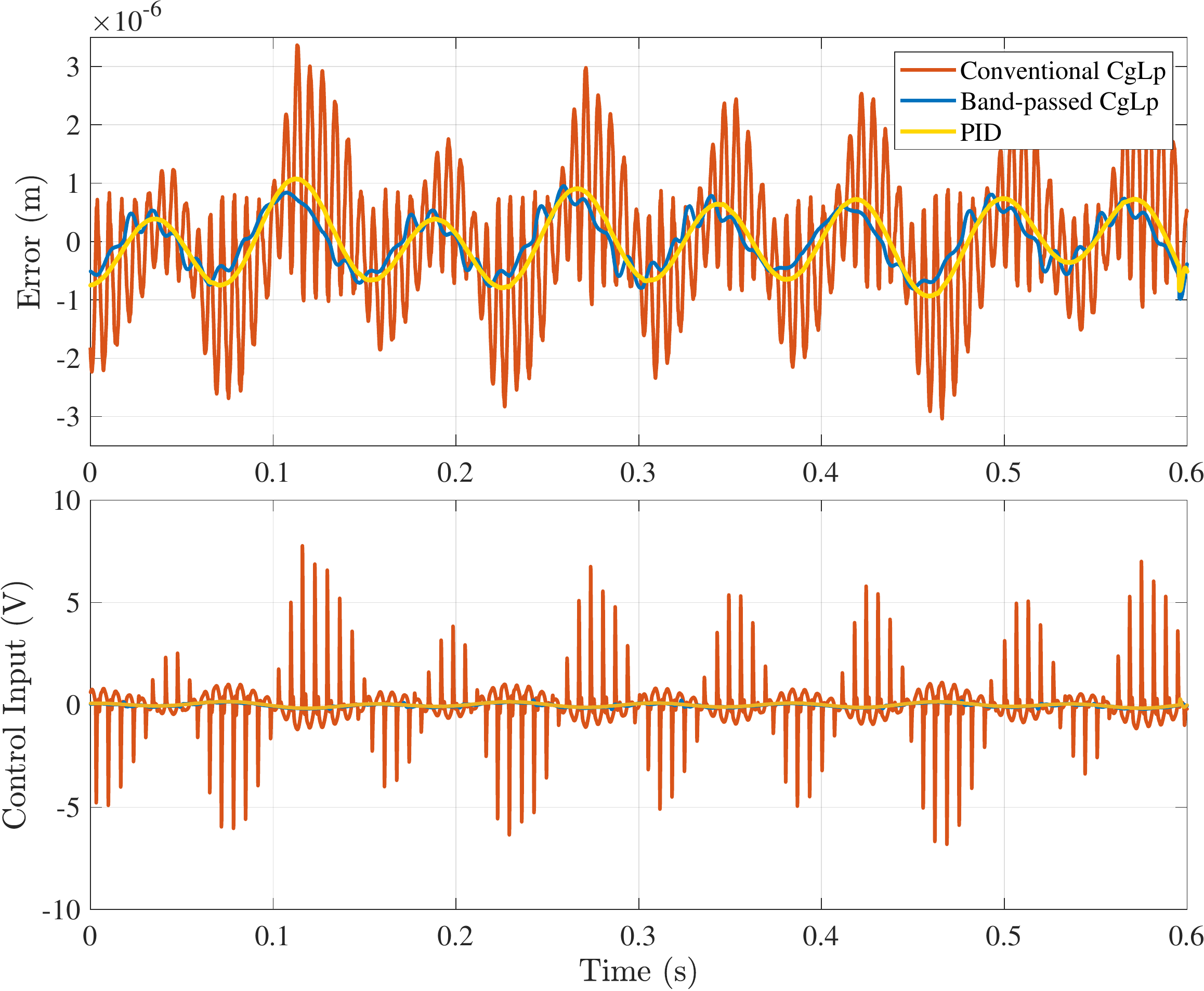}
	\caption{Error and control input compared for 3 designed controllers for the input of Eq.~(\ref{eq:r_3_sin}). The RMS of error is $\num{4.63e-7}$ m, $\num{1.17e-6}$ m and $\num{5.35e-7}$ m for band-passed CgLp, conventional CgLp and PID, respectively. The maximum of steady-state error for controllers is $\num{1.13e-6}$ m, $\num{3.47e-6}$ m and $\num{1.18e-6}$ m.}
	\label{fig:multi_sin}
\end{figure}

\section{Conclusion}
This paper investigated the nonlinearity and higher-order harmonics for reset elements with one resetting state. A new architecture was introduced which allowed for band-passing nonlinearity in a range of frequency and selectively reducing higher-order harmonics in a range of frequencies. After developing the HOSIDF analysis of the proposed architecture a method called ``phase shaping'' was proposed for design and tune of the introduced architecture. It was shown that first-order reset elements such as Clegg integrator, FORE or CgLp can be band-passed using the proposed architecture and method.\\
It was discussed that nonlinearity and higher-order harmonics can  be beneficial in some range of frequencies such a cross-over frequency region for increasing the phase margin and can be harmful at others like lower frequencies. In the phase shaping method, the approach to eliminate nonlinearity at one frequency was also introduced which is useful for systems with single important working frequency.\\
In order to validate the architecture, method and developed theories, 3 controllers designed to control a precision positioning stage. The controllers were a band-passed CgLp, a conventional one and a PID. It was validated in practice that higher-order harmonics for band-passed CgLp at lower frequencies is much smaller than the conventional one. Moreover, it was shown that there is clear relation between reduction of higher-harmonics at lower frequencies and tracking precision of the system. It was verified in practice that by band-passing higher-order harmonics a CgLp can have the same bandwidth and phase margin as PID and improved tracking precision.\\
Since the phase shaping method is capable of shaping the phase benefit of reset element, one may suggest shaping the phase benefit to achieve other characteristics such as iso-damping behaviour for the system or constant gain and positive phase slope. Furthermore, in this paper, only the band-passed CgLp was studied in detail, investigation of band-passed Clegg and FORE are considered as future works.


%



\section*{Acknowledgment}
This work was supported by NWO, through OTP TTW project \#16335.


\ifCLASSOPTIONcaptionsoff
  \newpage
\fi



\bibliographystyle{IEEEtran}
\bibliography{IEEEabrv,ref}
%
%
%

%

\begin{IEEEbiography}[{\includegraphics[width=1in,height=1.25in,clip,keepaspectratio]{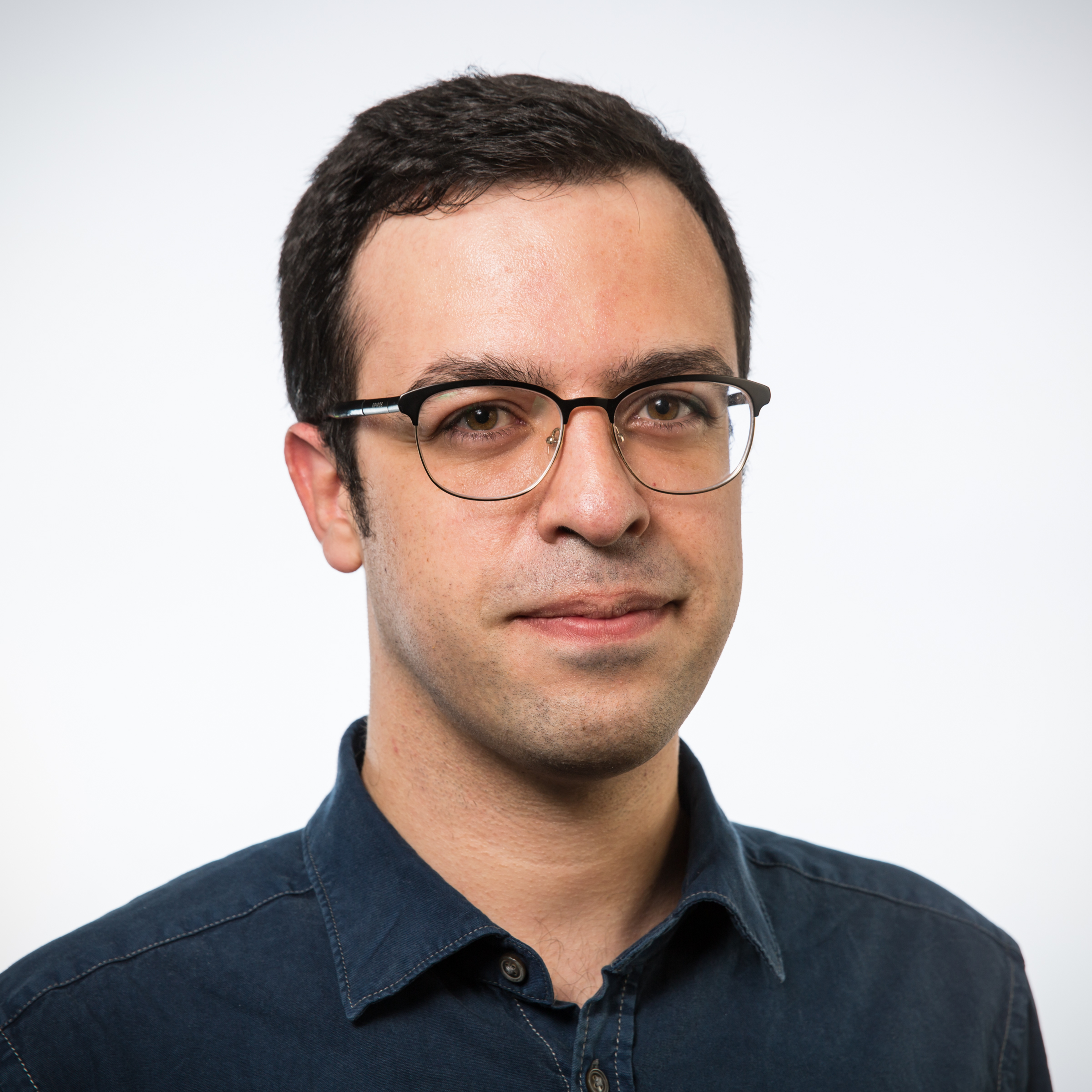}}]{Nima Karbasizadeh}
has received his M.Sc. degree in Mechatronics from University of Tehran, Iran in 2017. He is currently a PhD candidate at department of precision and microsystem engineering, Delft University of Technology, the Netherlands. His research interests are precision motion control, nonlinear precision control, mechatronic system design and haptics. 
\end{IEEEbiography}
\begin{IEEEbiography}[{\includegraphics[width=1in,height=1.25in,clip,keepaspectratio]{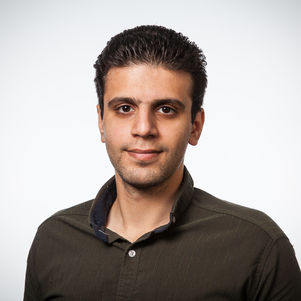}}]{Ali Ahmadi Dastjerdi}
received his master degree in mechanical engineering from Sharif University of Technology, Iran, in 2015. He is currently working as a PhD candidate at the department of precision and microsystem engineering, TU Delft, The Netherlands. His primary research interests are  mechatronic systems design, precision engineering, precision motion control, and nonlinear control.
\end{IEEEbiography}
\begin{IEEEbiography}[{\includegraphics[width=1in,height=1.25in,clip,keepaspectratio]{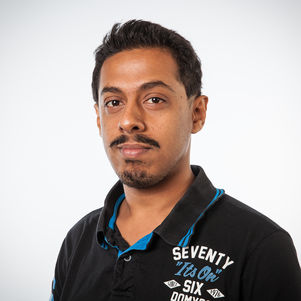}}]{Niranjan Saikumar}
 received his PhD degree in electrical engineering from
Indian Institute of Science, India in 2015. He is currently working as a postdoc
at the department of precision and microsystem engineering, TU Delft, The
Netherlands. His research interests are precision motion control, and
nonlinear precision control and mechatronic system with distributed actuation.
\end{IEEEbiography}
\begin{IEEEbiography}[{\includegraphics[width=1in,height=1.25in,clip,keepaspectratio]{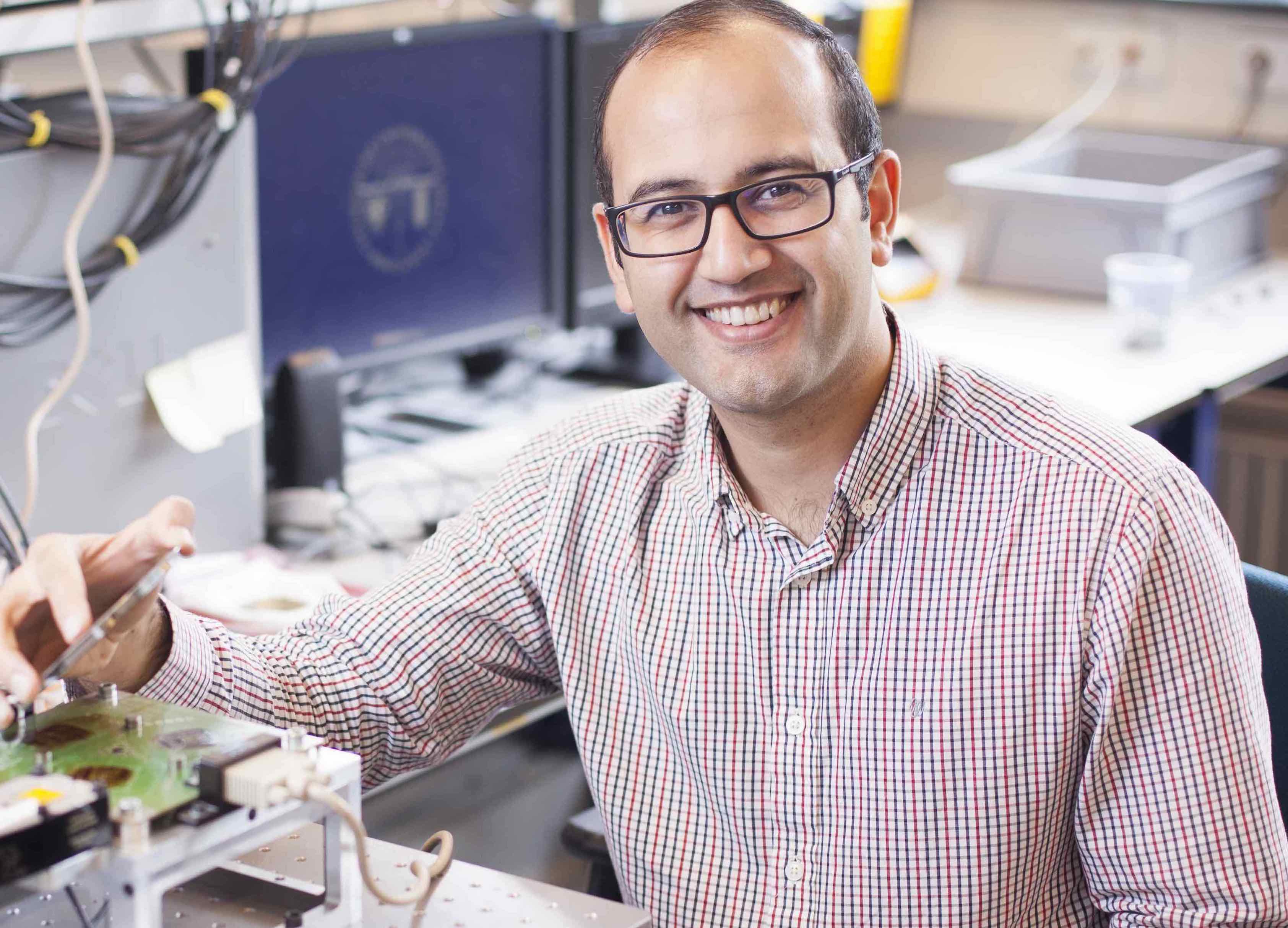}}]{S.~Hassan~HosseinNia}
received his PhD degree with honor "cum laude" in
electrical engineering specializing in automatic control: application in
mechatronics, form the University of Extremadura, Spain in 2013. His main
research interests are precision mechatronic system design, precision
motion control and mechatronic system with distributed actuation and
sensing. He has an industrial background working at ABB, Sweden. Since
October 2014 he is appointed as an assistant professor at the department of
precession and microsystem engineering at TU Delft, The Netherlands. He is
an associate editor of the international journal of advanced robotic systems
and Journal of Mathematical Problems in Engineering.
\end{IEEEbiography}
%
%
%




\end{document}